\newcommand{\defeq}{\vcentcolon=}
\newtheorem{Definition}{Definition}
\newtheorem{Lemma}{\bf Lemma}
\newtheorem{Corollary}{\bf Corollary}
\newtheorem{Proposition}[Lemma]{\bf Proposition}
\newtheorem{Theorem}{\bf Theorem}
\newtheorem{Remark}{Remark}
\def\Pr{{\rm \mathbf {Pr}}}
\def\E{{\rm \mathbf  E}}
\newcommand{\Sample}{\mathrm{Sample}}
\newcommand{\eps}{\varepsilon}
\newcommand{\X}{\mathcal{X}}
\newcommand{\Y}{\mathcal{Y}}
\newcommand{\Q}{\mathcal{Q}}
\newcommand{\I}{\mathrm{I}}
\newcommand{\II}{\mathrm{II}}
\newcommand{\opt}{\mathrm{OPT}}
\newcommand\blfootnote[1]{%
  \begingroup
  \renewcommand{\@makefntext}[1]{\noindent\makebox[1.8em][r]#1}
  \renewcommand\thefootnote{}\footnote{#1}%
  \addtocounter{footnote}{-1}%
  \endgroup
}
\begin{document}

\title{Total Variation Meets Differential Privacy} 
\author{%
  \IEEEauthorblockN{Elena Ghazi$^\ast$, Ibrahim Issa$^\ast$$^\dagger$ \\}
  \IEEEauthorblockA{%
    $^\ast$Electrical and Computer Engineering Department, American University of Beirut, Beirut, Lebanon\\
    $^\dagger$Center for Advanced Mathematical Sciences, American University of Beirut, Beirut, Lebanon \\
    ebg03@mail.aub.edu, ibrahim.issa@aub.edu.lb} 
}

\maketitle

\begin{abstract}

The framework of approximate differential privacy is considered, and augmented by leveraging the notion of ``the total variation of a (privacy-preserving) mechanism'' (denoted by $\eta$-TV). With this refinement, an exact composition result is derived, and shown to be significantly tighter than the optimal bounds for differential privacy (which do not consider the total variation). Furthermore, it is shown that $(\eps,\delta)$-DP with $\eta$-TV is closed under subsampling. The induced total variation of commonly used mechanisms are computed. Moreover, the notion of total variation of a mechanism is studied in the local privacy setting and privacy-utility tradeoffs are investigated. In particular, total variation distance and KL divergence are considered as utility functions and studied through the lens of contraction coefficients. Finally, the results are compared and connected to the locally differentially private setting.

\blfootnote{This work was presented in part at the IEEE International Symposium on Information Theory, June 2023 (Taipei, Taiwan). Part of this work was done while Ibrahim Issa was a visiting professor at EPFL (Lausanne, Switzerland). Elena Ghazi was at the American University of Beirut (Beirut, Lebanon). This work was supported by the University Research Board at the American University of Beirut.}
\end{abstract}


\section{Introduction}

Given a database $D$ that contains private records (e.g., medical records), one would like to design a mechanism $M$ to answer queries issued by a curious (but not necessarily malicious) analyst (e.g., medical researchers). Such mechanism should provide useful answers while preserving the privacy of the individuals whose records are in $D$. In this context, differential privacy (DP)~\cite{DworkDP} was developed, and subsequently widely adopted, to quantify the privacy guarantees of a given mechanism $M$. Essentially, given any database $D$, any possible record, and the output $M(D)$ of a differentially-private $M$, the analyst cannot tell (except with insignificant probability) whether or not the record appears in $D$.

However, the analyst may issue several queries, thus degrading the privacy guarantees. Bounds on the resulting guarantees are called \emph{composition theorems}. Notably, pure differential privacy yields pessimistic composition theorems. As such, researchers have developed relaxations and variations of differential privacy to yield better composition behaviour, e.g., approximate DP~\cite{ApproxDiff2006}, R\'enyi-DP~\cite{mironov2017renyi}, concentrated-DP~\cite{dwork2016concentrated}, Gaussian DP~\cite{GaussianDP}, etc.

Herein, we focus on approximate DP and Gaussian DP as they have the clearest operational interpretation. In particular, Wasserman and Zhou~\cite{DiffPrivacyAsHT} and Kairouz \emph{et al.}~\cite{TheCompositionTheoremForDP} provided an equivalent characterization of $(\eps,\delta)$-DP in terms of the achievable region of errors (type I and type II) of a binary hypothesis testing experiment.  Utilizing this view, Kairouz \emph{et al.}~\cite{TheCompositionTheoremForDP} proved an exact composition result for $(\eps,\delta)$-DP in terms of a family of $\{(\eps_j,\delta_j)\}_j$ parameters. As observed by Dong \emph{et al.}~\cite{GaussianDP}, the binary hypothesis view also elucidates why composition results for $(\eps,\delta)$-DP are pessimistic: the framework is not rich enough to capture the induced privacy region of the composed mechanisms (with one $(\eps,\delta)$ pair). Instead, they parameterized the privacy guarantee by (the lower convex envelope of) the privacy region itself, and denoted it by $f$-DP~\cite{GaussianDP}. Hence, they showed that $f$-DP is closed under composition, and proved an interesting limiting behavior wherein, under certain assumptions, the limit of the composition of $n$ $f$-DP mechanisms converges to the guarantees of the Gaussian mechanism. 

In this work, we introduce a simple refinement to (approximate) differential privacy that yields better composition results. Namely, we leverage the total variation (TV) of a mechanism, denoted by $\eta$-TV, so that we keep track of both the $(\eps,\delta)$-parameters for DP and the $\eta$-parameter for TV. This retains an equivalent characterization in terms of binary hypothesis testing. In addition to enabling tighter analyses, explicitly incorporating total variation offers important advantages in the context of learning algorithms. In particular, $\eta$-TV has emerged as the target measure to control \emph{membership inference attacks} (MIA) --- which pose an orthogonal concern compared to differential privacy. Moreover, $\eta$-TV can be used to bound the generalization error of the algorithms.

Herein, our contributions consist of the following:
\begin{itemize}
    \item We prove an \emph{exact} composition bound for $(\eps,\delta)$-DP coupled with $\eta$-TV (Section~\ref{sec:main}).
    \item We show the provided bounds can be significantly tighter than bounds where total variation is not taken into consideration (using examples, as well as  asymptotic analysis~\cite{GaussianDP}) (Sections~\ref{sec:main} and~\ref{sec:limit}).
    \item We show that $(\eps,\delta)$-DP with $\eta$-TV is closed under subsampling (where the mechanism computes the query answer on a random subset of the database) (Section~\ref{sec:subsampling}).
    \item We compute the total variation of commonly used mechanisms and demonstrate an interesting connection with the staircase mechanism~\cite{StaircaseMechanism} (Section~\ref{sec:app}). 
    \item We analyze the differentially private stochastic gradient descent algorithm (Section~\ref{sec:SGD}).
\end{itemize}

We also study the privacy setting in which data remains private even from the statistician. Given distributions $P_0$ and $P_1$ and an $\eps$-locally differentially private ($\eps$-LDP) privatization mechanism $Q$, let $M_0$ and $M_1$ be the induced output marginals. Herein, the goal is to design a mechanism that satisfies a certain privacy constraint (to satisfy the data owners), while maximizing a certain statistical utility function (to satisfy the statistician).
Duchi \emph{et al.} \cite{duchi2014local} proved bounds on the symmetrized KL divergence between $M_0$ and $M_1$ in the $\eps$-LDP setting. Kairouz \emph{et al.} \cite{ExtremalMechanisms} provided an $\eps$-LDP binary mechanism that maximizes the total variation between $M_0$ and $M_1$ for all $P_0$ and $P_1$, and approximates the mechanism that maximizes the KL divergence between the marginals. We consider the notion of total variation of a mechanism in the local privacy setting and study privacy-utility tradeoffs (Section~\ref{sec:localDP}). In particular, our contributions consist of the following.

    \begin{itemize}
        \item We provide a mechanism that maximizes the total variation between the marginals in the case of $\eps$-LDP with $\eta$-TV.
        \item We generalize bounds on the contraction coefficient of KL divergence for a privacy-preserving mechanism ~\cite{duchi2014local,ExtremalMechanisms,asoodeh:22} by accounting simultaneously for $\eps$-LDP and $\eta$-TV constraints.
        \item We generalize bounds on $\chi^2$ of the marginal distributions in terms of the total variation of the input distributions~\cite{duchi2014local,asoodeh:22}.  
        \item Finally, given an $\eps$-LDP mechanism, we show how to construct an $\eps$-LDP with $\eta$-TV mechanism, and use this result to connect the corresponding privacy-utility tradeoffs in the two settings (i.e., enforcing $\eps$-LDP constraint versus enforcing $\eps$-LDP and an $\eta$-TV constraint).
    \end{itemize}
      
Finally, it is worth mentioning that it is typically simple to compute bounds on total variation using, for instance, KL divergence or Chernoff information. As such, the computational overhead of keeping track of total variation is relatively low. Furthermore, it may be estimated from data in a ``black-box'' manner~\cite{BayesSecurity}.

{\bf{Prior Work:}} Total variation is arguably a ``natural'' measure to consider in privacy analysis and has indeed  been used in the literature as a privacy metric: Barber and Duchi \cite{barber2014privacy} compared various definitions of privacy for several estimation problems, including \emph{$\alpha$-total variation privacy}. Geng \emph{et al.} \cite{Geng_OptimalNoiseAddingMechanism} derived the optimal $(0,\delta)$-differentially private noise-adding mechanism for single real-valued query function under a cost-minimization framework. Jia \emph{et al.} \cite{TVD_MIA} used the notion of \emph{total variation distance privacy} to accurately estimate privacy risk, despite it being a weaker privacy definition than differential privacy. The total variation also appeared as a bound on generalization metrics in the context of machine learning, and consequently as a bound on vulnerability to Membership Inference Attacks (MIAs). Dwork \emph{et al.} \cite{fairness-through-awareness} examined the total variation as a notion of fairness. Bassily \emph{et al.}~\cite{AlgorithmicStability} defined \emph{TV-stability} as a notion of algorithmic stability used to bound generalization error, which later appeared in Raginsky \emph{et al.}'s analysis of bias in learning algorithms~\cite{Analysis-of-Stability-Raginsky}. Kulynych \emph{et al.} \cite{DisparateVulnerability,WYSIWYG} proved through the notion of \emph{distributional generalization} that the total variation bounds vulnerability to MIAs, as well as disparate vulnerability against MIAs (unequal success rate of MIAs against different population subgroups). Chatzikokolakis \emph{et al.} \cite{BayesSecurity} also studied \emph{Bayes security}, a security metric inspired by the cryptographic advantage, equal to the complement of the total variation. Additionally (as will be seen in this paper), it naturally appears in many existing analysis. Note that our work differs from existing frameworks that leverage the notion of total variation in a privacy setting (e.g., \cite{barber2014privacy,Geng_OptimalNoiseAddingMechanism,TVD_MIA}) in that we suggest keeping track of the total variation of a mechanism \emph{in addition to} differential privacy parameters, rather than as a standalone metric. This framework continues to leverage the strong guarantees of differential privacy, while additionally accounting for the mechanism-specific property of total variation.

{\bf{Novel Contributions.}} In addition to our ISIT findings, we consider applications of our results in the context of membership inference attacks, as well as differentially private learning algorithms (namely, noisy stochastic gradient descent). Furthermore, we study the notion of total variation of a mechanism in the local privacy setting. We obtain generalized privacy-utility bounds, showing that one can obtain more precise guarantees by accounting for both $\eps$ and $\eta$ parameters.


\section{Preliminaries and Definitions}

Fix an alphabet $\X$ and an integer $m \in \mathbb{N}$. A database $D$ is an element in $\X^m$. Two databases that differ in one entry are called neighboring databases. A (query-answering) mechanism $M$ is a randomized map from $\X^m$ to an output space, which we denote by $\Y$. 

\begin{Definition}[Differential Privacy (DP)~\cite{DworkDP}]
Given $\eps \geq 0$ and $\delta \in [0,1]$, a mechanism $M$ is $(\eps,\delta)$-differentially private if, for all neighboring databases $D_0$ and $D_1$ and all $S \subseteq \Y$,
\begin{align*}
    \Pr( M(D_0) \in S) \leq e^\eps \Pr(M(D_1) \in S) + \delta. 
\end{align*}
\end{Definition}
Wasserman and Zhou~\cite{DiffPrivacyAsHT} and Kairouz \emph{et al.}~\cite{TheCompositionTheoremForDP} provide an equivalent characterization of $(\eps,0)$- and $(\eps,\delta)$-DP, respectively, in terms of binary hypothesis testing. In particular, consider a mechanism $M$, two neighboring databases $D_0$ and $D_1$, and let $P_0$ and $P_1$ be the corresponding distributions over $\Y$ of $M(D_0)$ and $M(D_1)$.  Given a random output $Y \in \Y$, an adversary aims to distinguish between 
\begin{center}
    $H_0$: $Y \sim P_0$, \quad and \quad $H_1 $: $Y \sim P_1$.
\end{center}

Let $h$ be any (possibly randomized) decision function, $h: \Y \rightarrow \{0,1\}$, and denote by $\beta_\I(P_0,P_1,h) = P_0(h(Y)=1)$ and $\beta_{\I\I}(P_0,P_1,h) = P_1(h(Y)=0)$ the type I and type II errors, respectively. Among such decision functions, the ROC curve 
(also called the tradeoff function~\cite{GaussianDP}) describes the best type II error that can be achieved for a given level of type I error:
\begin{Definition}[ROC] \label{def:ROC}
    For a binary hypothesis testing experiment with distributions $P_0$ and $P_1$, the ROC curve, $f: [0,1] \to [0,1]$, is defined as 
    \begin{align*}
        f(P_0,P_1) (t) = 
        \inf_{h} \left\lbrace  \beta_\II (P_0,P_1,h): \beta_\I (P_0,P_1,h) \leq t \right\rbrace.
    \end{align*}
\end{Definition}
DP can then be described in terms of the ROC curves:
\begin{Theorem}[{~\cite[Theorem 2.1]{TheCompositionTheoremForDP}}] \label{thm:DPasHP}
Given $\eps \geq 0$ and $\delta \in [0,1]$, a mechanism $M$ is $(\eps,\delta)$-DP if and only if for all neighboring databases $D_0$ and $D_1$ and all $t \in [0,1],$
\begin{align*}
    t + e^\eps f(P_0,P_1)(t) \geq 1-\delta, \\
    e^\eps t + f(P_0,P_1)(t) \geq 1 - \delta,
\end{align*}
where $M(D_0) \sim P_0$ and $M(D_1) \sim P_1$.
\end{Theorem}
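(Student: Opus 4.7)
The plan is to establish both directions by passing between decision functions $h$ and their rejection regions $S = \{y : h(y) = 1\}$, leveraging the fact that the $(\eps,\delta)$-DP definition consists of a family of linear inequalities on $P_0(S)$ and $P_1(S)$ that are symmetric in the order of the neighbors $(D_0,D_1)$ and in the substitution $S \leftrightarrow S^c$.

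For the forward direction, I would fix a deterministic test $h$ with rejection region $S$, so that $\beta_\I = P_0(S)$ and $\beta_{\II} = P_1(S^c) = 1 - P_1(S)$. Applying $(\eps,\delta)$-DP to the set $S^c$ yields $P_0(S^c) \leq e^\eps P_1(S^c) + \delta$, which rearranges to $\beta_\I + e^\eps \beta_{\II} \geq 1 - \delta$. Applying DP with the roles of the two neighboring databases swapped, so that $P_1(S) \leq e^\eps P_0(S) + \delta$, rearranges to $e^\eps \beta_\I + \beta_{\II} \geq 1 - \delta$. Randomized decision functions reduce to the deterministic case by extending the output to $(Y,U)$ with $U$ independent uniform randomness (the extended mechanism is still $(\eps,\delta)$-DP); after which the same argument applies. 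Taking the infimum over all $h$ with $\beta_\I(h) \leq t$ then yields the two stated ROC bounds.

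For the converse, I would start from an arbitrary measurable $S \subseteq \Y$. Considering the deterministic test with rejection region $S^c$ gives $\beta_\I = 1 - P_0(S)$, $\beta_{\II} = P_1(S)$, and by definition $\beta_{\II} \geq f(P_0,P_1)(\beta_\I)$. The first hypothesis $t + e^\eps f(t) \geq 1 - \delta$ evaluated at $t = 1 - P_0(S)$ yields $(1 - P_0(S)) + e^\eps P_1(S) \geq 1 - \delta$, which rearranges to $P_0(S) \leq e^\eps P_1(S) + \delta$. Next, considering the test with rejection region $S$ itself, so $\beta_\I = P_0(S)$ and $\beta_{\II} = 1 - P_1(S)$, the second hypothesis $e^\eps t + f(t) \geq 1 - \delta$ at $t = P_0(S)$ yields $e^\eps P_0(S) + 1 - P_1(S) \geq 1 - \delta$, which rearranges to $P_1(S) \leq e^\eps P_0(S) + \delta$. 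Together these give the $(\eps,\delta)$-DP constraint for every measurable $S$ and both orderings of the neighboring pair.

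The main subtlety I anticipate is the careful treatment of randomized tests and of the infimum in Definition~\ref{def:ROC}: one must check that the feasible set of $(\beta_\I,\beta_{\II})$ pairs is the convex hull of pairs achieved by deterministic tests, so that the linear ROC inequalities transfer correctly. Otherwise the proof is essentially bookkeeping: each of the four flavors of DP inequalities (over $S$ or $S^c$, with $(D_0,D_1)$ or its swap) corresponds to exactly one of the two ROC constraints through a direct algebraic rearrangement.
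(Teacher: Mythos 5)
The paper does not prove Theorem~\ref{thm:DPasHP}; it cites it directly from Kairouz \emph{et al.}, so there is no paper-internal proof to compare against. Your blind proof is nevertheless correct and is essentially the standard argument for this equivalence. Both directions are handled soundly: in the forward direction you correctly derive $\beta_\I + e^\eps\beta_\II \geq 1-\delta$ and $e^\eps\beta_\I + \beta_\II \geq 1-\delta$ from the DP inequality applied to $S^c$ and to the swapped pair respectively, reduce randomized tests to deterministic ones on the enlarged output space $(Y,U)$ (the enlarged mechanism remains $(\eps,\delta)$-DP by integrating the slice-wise inequality over $u$), and pass to the infimum; in the converse you correctly instantiate the two ROC inequalities at $t = 1-P_0(S)$ and $t = P_0(S)$, using the fact that any deterministic test's $\beta_\II$ upper-bounds $f$ at its own $\beta_\I$, to recover $P_0(S)\le e^\eps P_1(S)+\delta$ and $P_1(S)\le e^\eps P_0(S)+\delta$ for every measurable $S$. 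The one small remark is that your anticipated subtlety about the achievable region being the convex hull of deterministic-test points is not actually needed: the forward direction only requires the extension-to-$(Y,U)$ reduction (which you state), and the converse only uses the defining infimum property of $f$, not convexity of the region. So the worry is overcautious rather than a gap.
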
 
The region $(\beta_\I,\beta_\II)$ corresponding to the above constraints will also be referred to as the ``privacy region'' of the mechanism.
The ROC curve corresponding to $(\eps,\delta)$-DP is shown in Figure~\ref{fig:roc_dp} (region in gray).
\begin{figure}[htp]
\centering
\includegraphics[scale=0.3]{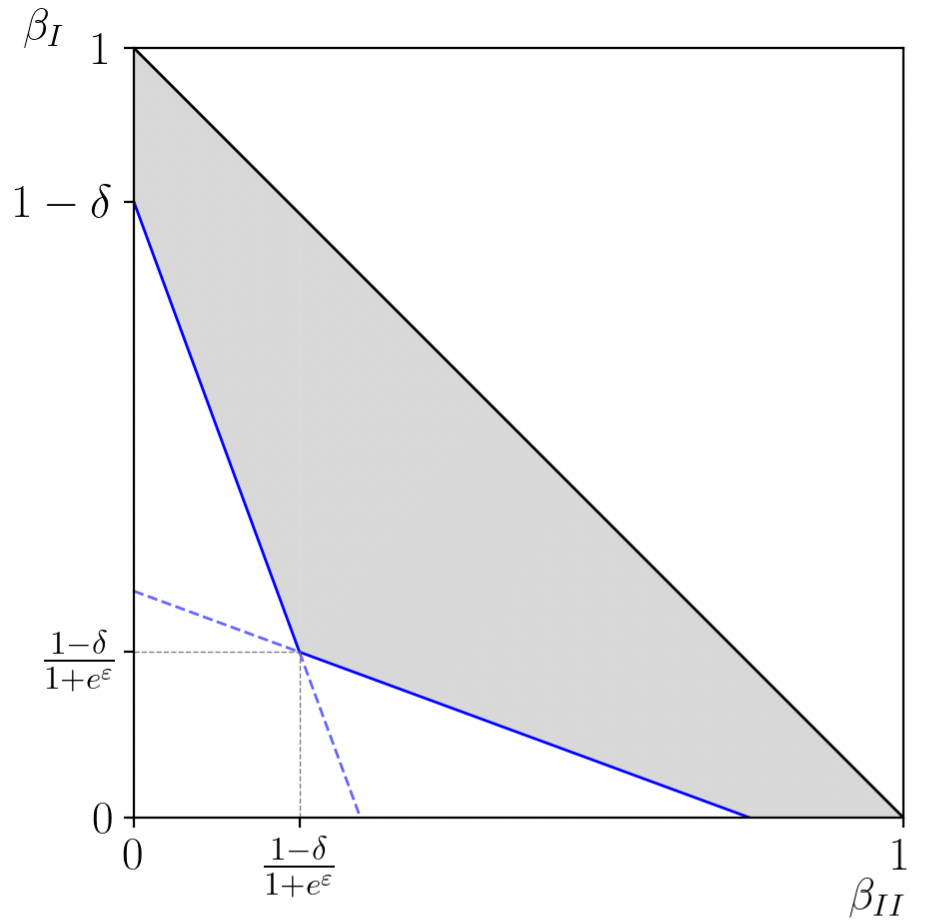}
\caption{ROC corresponding to $(\varepsilon,\delta)$-DP.}
\label{fig:roc_dp}
\end{figure}
We augment the $(\eps,\delta)$-DP framework using the notion of ``total variation of the mechanism'' (which first appeared in \cite{barber2014privacy}). First, recall
\begin{Definition}[Total Variation]
    Given two distributions $P$ and $Q$ over a common alphabet $\Y$, the total variation is defined as follows:
    \begin{align*}
        d_{TV}(P,Q) = \sup_{A \subseteq \Y} (P(A)-Q(A) ).
    \end{align*}
\end{Definition}
\begin{Definition}[Total Variation of a Mechanism]\label{def:TV-mechanism} Given $\eta \in [0,1]$, a mechanism $M$ has total variation less than $\eta$ (or $d_{TV}(M) \leq \eta$ or $\eta$-TV, for short) if, for all neighboring databases $D_0$ and $D_1$,
    \begin{align} \label{eq:def-TV-mechanism}
        d_{TV} \left( P_0, P_1 \right) \leq \eta,
    \end{align}
    where $M(D_0) \sim P_0$ and $M(D_1) \sim P_1$.
\end{Definition}
Total variation is closely related to hypothesis testing: given two distributions $P_0$ and $P_1$, then $d_{TV}(P_0,P_1) \leq \eta$ if and only if\footnote{
This can be seen by noting the following: consider the line with slope -1 tangent to the curve $f$, then its $y$-intercept is given by
\begin{align*}
    -f^\star(-1) & = - \sup_{s \in [0,1]} \{ -s -f(s) \} = \inf_{s \in [0,1]} \{ s + f(s) \} \\
    & = \inf_{A \subseteq \Y} P_0(A) + P_1(A^c) = 1 + \inf_{A \subseteq \Y} P_0(A) - P_1(A) \\
    & = 1 - d_{TV}(P_0,P_1).
\end{align*}
} $f(P_0,P_1)(t) \geq 1-\eta -t$ for all $t \in [0,1]$. As such, $M$ is $\eta$-TV can be rewritten as $(0,\eta)$-DP.
\begin{Corollary} \label{corr:DP-TV-HP}
Given $\eps \geq 0$, $\delta \in [0,1]$, and $\eta \in [0, \delta+ (1-\delta)\frac{e^\eps-1}{e^\eps+1}]$, a mechanism $M$ is $(\eps,\delta)$-DP and $\eta$-TV if and only if, for all neighboring databases $D_0$ and $D_1$ and  all $t \in [0,1]$, As such, $M$ is $\eta$-TV can be rewritten as $(0,\eta)$-DP.
\begin{align}
    t + e^\eps f(P_0,P_1)(t) \geq 1-\delta, \\
    e^\eps t + f(P_0,P_1)(t) \geq 1 - \delta, \\
    t + f(P_0,P_1)(t) \geq 1-\eta, \label{eq:tv_inequality}
\end{align}
where $M(D_0) \sim P_0$ and $M(D_1) \sim P_1$.
\end{Corollary}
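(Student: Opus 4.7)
The plan is to combine Theorem~\ref{thm:DPasHP} (which handles the first two inequalities) with the footnote's characterization of total variation in terms of the ROC curve (which handles the third inequality) in a direct, bidirectional argument.

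For the forward direction, I would assume $M$ is both $(\eps,\delta)$-DP and $\eta$-TV. Fixing arbitrary neighboring databases $D_0, D_1$ with induced output distributions $P_0, P_1$, Theorem~\ref{thm:DPasHP} directly supplies the two DP inequalities for every $t \in [0,1]$. For the third inequality, I would invoke the identity established in the footnote, namely $\inf_{t \in [0,1]} \{t + f(P_0,P_1)(t)\} = 1 - d_{TV}(P_0,P_1)$. Since $d_{TV}(P_0,P_1) \leq \eta$ by the $\eta$-TV assumption, this infimum is at least $1-\eta$, which is exactly inequality~\eqref{eq:tv_inequality}.

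For the converse direction, I would assume all three inequalities hold. The first two, together with the ``if'' direction of Theorem~\ref{thm:DPasHP}, immediately yield $(\eps,\delta)$-DP. The third inequality says $t + f(P_0,P_1)(t) \geq 1-\eta$ for all $t$, so taking the infimum and reading the footnote identity in reverse gives $d_{TV}(P_0,P_1) \leq \eta$, i.e.\ $\eta$-TV.

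Finally, I would briefly comment on the stated range of $\eta$. A direct computation gives
\begin{equation*}
\delta + (1-\delta)\tfrac{e^\eps-1}{e^\eps+1} = \tfrac{2\delta + e^\eps - 1}{e^\eps+1} = 1 - \tfrac{2(1-\delta)}{e^\eps+1},
\end{equation*}
and $\tfrac{2(1-\delta)}{e^\eps+1}$ is exactly the value of $t+f$ at the corner of the DP region, obtained by intersecting the lines $t + e^\eps f = 1-\delta$ and $e^\eps t + f = 1-\delta$ at $t = f = \tfrac{1-\delta}{e^\eps+1}$. Hence for $\eta$ in the stated range the TV inequality~\eqref{eq:tv_inequality} genuinely strengthens the DP constraints, while outside that range it is already implied by $(\eps,\delta)$-DP and the statement becomes vacuous. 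Since the argument reduces to assembling Theorem~\ref{thm:DPasHP} and the footnote identity, I anticipate no serious obstacle; the only care required is to verify that the footnote's Legendre--Fenchel computation can be cited as an ``if and only if'' (which it can, because it identifies $\inf_t\{t+f\}$ exactly with $1-d_{TV}$).
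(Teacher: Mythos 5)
Your proposal is correct and takes essentially the same route the paper does: the corollary is treated as an immediate consequence of Theorem~\ref{thm:DPasHP} (supplying the two DP inequalities) together with the footnote's Legendre--Fenchel identity $\inf_{t}\{t+f(P_0,P_1)(t)\}=1-d_{TV}(P_0,P_1)$ (supplying, in both directions, the equivalence of the third inequality with $\eta$-TV). Your additional check that $\delta+(1-\delta)\frac{e^\eps-1}{e^\eps+1}$ is the value of $1-(t+f)$ at the corner of the DP region matches the paper's Remark on the maximal admissible $\eta$.
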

\begin{Remark}
The maximum possible value for $\eta$ can be seen from the region in Figure~\ref{fig:roc_dp}, where $\beta_\I +\beta_\II \geq \frac{2(1-\delta)}{1+e^\eps}= 1-\left(\delta+(1-\delta) \frac{e^\eps -1}{e^\eps+1}\right)$.
\end{Remark}
The corresponding ROC curve is shown in Figure~\ref{fig:roc_dp_tv}, with the privacy region shaded in gray. Compared with Figure~\ref{fig:roc_dp}, the effect of introducing the $\eta$-TV constraint corresponds to ``slicing'' the original region (``moving it away'' from the origin). Jia \emph{et al.}~\cite[Theorem 4]{TVD_MIA} concurrently 
studied $\eta$-TV privacy in the hypothesis testing framework. However, their formulation does not simultaneously incorporate both $\eps$-DP and $\eta$-TV (as opposed to this work). Nevertheless, enforcing both $\eps$-DP and $\eta$-TV constraints can be captured within the $f$-DP framework~\cite{GaussianDP}, by choosing $f$ that satisfies the inequalities of Corollary~\ref{corr:DP-TV-HP} with equality.
\begin{figure}[htp]
\centering
\includegraphics[scale=0.3]{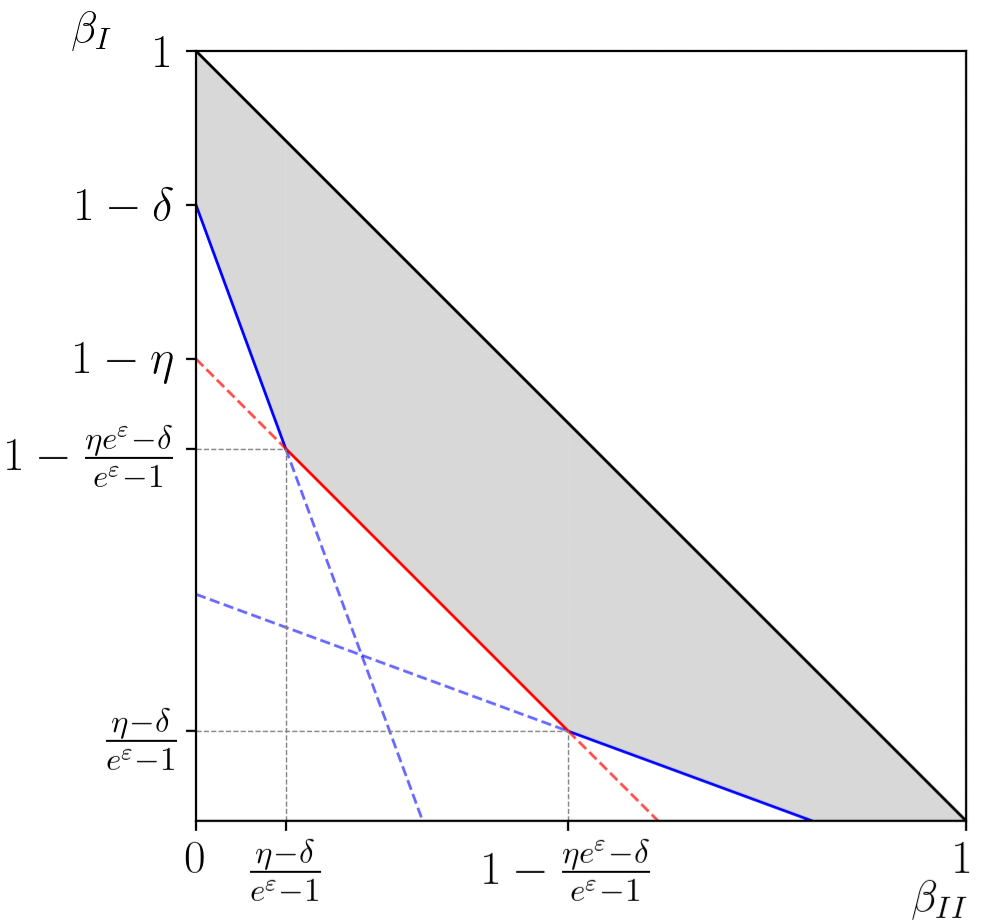}
\caption{ROC corresponding to $(\varepsilon,\delta)$-DP $\eta$-TV.}
\label{fig:roc_dp_tv}
\end{figure}

\section{Main Result: Adaptive Composition} \label{sec:main}
Our main result characterizes exactly the $k$-fold adaptive composition of $(\eps,\delta)$-DP and $\eta$-TV mechanisms. More precisely, the $k$-fold binary hypothesis experiment is defined as follows:
\begin{itemize}
    \item An adversary $\mathcal{A}$ chooses two neighboring databases $D_0$ and $D_1$.
    \item A parameter $b \in \{0,1\}$ is fixed (unknown to $\mathcal{A}$).
    \item $\mathcal{A}$ issues $k$ queries $q_1$, $q_2, \ldots, q_k$, and receives $M_1(D_b)$, $M_2(D_b), \ldots, M_k(D_b)$. 
    The choice of each query (or equivalently, the mechanism $M_i$) may depend on the outputs of previous queries. In particular, letting $\Y_i$ be the output alphabet of $M_i$, then $M_i$ is a mechanism $M_i: \mathcal{D} \times \Y_1 \times \ldots \times \Y_{i-1} \rightarrow \Y_i$, satisfying: for all $y^{i-1} \in Y_1 \times Y_2 \ldots Y_{i-1}$, $M_i(.|y^{i-1})$ is an $(\eps,\delta)$-DP and $\eta$-TV mechanism.
    \item $\mathcal{A}$ guesses $\hat{b} \in \{0,1\}$.
\end{itemize}
Given that each $M_i$ is an $(\eps,\delta)$-DP and $\eta$-TV mechanism, what is the privacy region induced by the above experiment? Allowing the choice of each query to depend on the outcomes of previous queries is referred to as \emph{adaptive composition}.

\begin{Theorem} \label{thm:CompositionTVApproximate} 
For any $\varepsilon \geq 0$, $\delta \in [0,1]$, and $\eta \in [\delta, \delta + \frac{(e^{\varepsilon}-1)(1-\delta)}{1+e^{\varepsilon}}]$, the class of $(\varepsilon,\delta)$-differentially private and $\eta$-total variation mechanisms satisfies
\begin{align*}
    &(j\varepsilon, 1 - (1-\delta)^{k}(1-\delta_j)) \text{-differential privacy with } \\
    &d_{TV} = 1 - (1-\delta)^{k}(1-\delta_0)
\end{align*}
under $k$-fold adaptive composition, for all $j \in \{0, 1, ..., k\}$, where
\begin{align}
     \delta_j   = \!\! \sum_{a = 0}^{k-j-1} \!\! {k \choose a}  \!\! \sum_{\ell=0}^{\lceil\frac{k-j-a}{2}\rceil - 1}  & {k-a \choose \ell} \left(\frac{1-\alpha}{1+e^\varepsilon} \right)^{k-a}  \cdot  \notag \\
     & \alpha^a  \left(e^{(k-\ell-a)\varepsilon} -e^{(\ell+j)\varepsilon} \right) \label{eq:thm-compositionTV}
\end{align}
and $\displaystyle \alpha = 1 - \frac{(\eta-\delta)(1+e^{\varepsilon})}{(1-\delta)(e^{\varepsilon}-1)}$.
\end{Theorem}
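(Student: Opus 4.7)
The plan is to adapt the Kairouz--Oh--Viswanath composition argument \cite{TheCompositionTheoremForDP} by identifying a worst-case single-round mechanism $M^*$ that jointly saturates the $(\varepsilon,\delta)$-DP and $\eta$-TV constraints, and then computing the $k$-fold composition of $M^*$ directly. I take $M^*$ to be the three-component mixture that, on neighboring inputs $D_0,D_1$, does one of the following: with probability $\delta$, it \emph{discloses} (outputs a symbol that distinguishes $D_0$ from $D_1$ perfectly); with probability $(1-\delta)\alpha$, it is \emph{trivial} (outputs the same distribution on $D_0$ and $D_1$); and with probability $(1-\delta)(1-\alpha)$, it applies the extreme $(\varepsilon,0)$-DP randomized response that places masses $e^\varepsilon/(1+e^\varepsilon)$ vs.\ $1/(1+e^\varepsilon)$ on two distinguished symbols. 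A direct computation shows $M^*$ is $(\varepsilon,\delta)$-DP with $d_{TV}(M^*) = \delta + (1-\delta)(1-\alpha)\tfrac{e^\varepsilon-1}{e^\varepsilon+1}$; equating this to $\eta$ recovers the formula for $\alpha$ in the statement, and the admissible range $\eta\in[\delta,\delta+\tfrac{(e^\varepsilon-1)(1-\delta)}{1+e^\varepsilon}]$ corresponds exactly to $\alpha\in[0,1]$. By construction, the ROC of $M^*$ traces the piecewise-linear lower boundary of the privacy region of Corollary~\ref{corr:DP-TV-HP}, saturating each of the three constraints on a non-trivial segment.

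Invoking the hypothesis-testing characterization of Corollary~\ref{corr:DP-TV-HP}, every $(\varepsilon,\delta)$-DP and $\eta$-TV mechanism has ROC pointwise above that of $M^*$, and hence (by a Blackwell-type coupling analogous to the one used in the $f$-DP/KOV framework) its output can be realized as a randomized post-processing of the output of $M^*$. Iterating along the tree of adaptive queries, the $k$-fold adaptive composition of any such family of mechanisms is dominated in the ROC sense by the product of $k$ independent copies of $M^*$, reducing the problem to analyzing that product explicitly.

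For the product mechanism I decompose each of the $k$ independent rounds into its three cases. The event ``at least one round is a disclosure'' has probability $1-(1-\delta)^k$ and contributes directly to the $\delta$ budget of the composition, since such an event leaks the entire database. Conditioning on the complementary event (probability $(1-\delta)^k$), each round is independently trivial with probability $\alpha$ and a pure $(\varepsilon,0)$-DP extreme round with probability $1-\alpha$. Further conditioning on the number $a$ of trivial rounds (which convey no information at all), the remaining $k-a$ extreme rounds form a classical $(k{-}a)$-fold $(\varepsilon,0)$-DP composition, whose induced $\delta_j^{(k-a)}$ is given by the standard KOV formula $\tfrac{1}{(1+e^\varepsilon)^{k-a}}\sum_{\ell=0}^{\lceil(k-a-j)/2\rceil-1}\binom{k-a}{\ell}\bigl(e^{(k-a-\ell)\varepsilon}-e^{(\ell+j)\varepsilon}\bigr)$. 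Since the mixture weights are independent of the database, averaging over $a$ with binomial weights $\binom{k}{a}\alpha^a(1-\alpha)^{k-a}$ yields the conditional $\delta_j$ as in \eqref{eq:thm-compositionTV} (the upper summation limit $k-j-1$ arises because the inner sum is empty whenever $k-a\le j$). Combining with the disclosure term produces the overall $\delta$ parameter $1-(1-\delta)^k(1-\delta_j)$, and setting $j=0$ (which is the $\varepsilon=0$ TV constraint) recovers the stated total variation bound.

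The main obstacle is the second step: verifying that adding the trivial third component does not disrupt the KOV dominance/coupling argument. Concretely, I need to confirm that $M^*$ remains an extreme point of the joint $(\varepsilon,\delta,\eta)$ privacy region, and that \emph{every} mechanism in this region can be realized as a randomized post-processing of $M^*$, which is what allows single-round ROC dominance to propagate through adaptive composition. Once this is settled, the conditional KOV computation above is routine combinatorics.
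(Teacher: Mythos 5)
Your proposal is correct and follows essentially the same path as the paper: the three-component mixture you describe is precisely the paper's dominating mechanism (written out in Appendix~\ref{app:general-delta} as explicit five-point distributions, with disclosure atoms at $x\in\{0,4\}$, the trivial atom at $x=2$, and the randomized-response pair at $x\in\{1,3\}$), the converse is the same Blackwell simulation argument, and your ``condition on disclosures, then on the number $a$ of trivial rounds, apply the $(k{-}a)$-fold KOV formula, and average'' is an equivalent reorganization of the paper's direct Neyman--Pearson enumeration of output sequences indexed by $(a,\ell)$. The ``obstacle'' you flag at the end is already resolved by your own earlier reasoning: since $M^*$'s ROC coincides exactly with the boundary of Corollary~\ref{corr:DP-TV-HP}, the Blackwell post-processing step goes through unchanged, and the added trivial component causes no trouble because it carries identical mass under both hypotheses and hence contributes nothing to the likelihood-ratio test.
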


The proof follows the machinery developed by Kairouz \emph{et al.}~\cite{TheCompositionTheoremForDP}. In particular, we introduce a ``dominating'' mechanism, i.e., a mechanism which exactly achieves the region described by the equations of Corollary~\ref{corr:DP-TV-HP}. The achievability is then shown by analyzing the (non-adaptive) composition of the dominating mechanism (which admits a simple form). The key component of the converse, similarly to~\cite{TheCompositionTheoremForDP}, is a result by Blackwell~\cite[Corollary of Theorem 10]{BlackwellEquivalentExperiments} which states the following: for two binary hypothesis testing experiments $A$ and $B$, if $f_B(t) \geq f_A(t)$ for all $t \in [0,1]$, then $B$ can be simulated from $A$. This is why the dominating mechanism yields the worst-case degradation (other mechanisms can be simulated from it).
 
\subsection{Comparison}

The composition bound proved by Kairouz \emph{et al.}~\cite{TheCompositionTheoremForDP} states that for any $\varepsilon$ and $\delta \in [0,1]$, the class of $(\varepsilon,\delta)$-differentially private mechanisms satisfies
\begin{align*}
    (j\varepsilon, 1 - (1-\delta)^{k}(1-\delta_j)) \text{-differential privacy}
\end{align*}
under $k$-fold adaptive composition, for all $j \in \{k- 2i: i=0,1,\ldots,\lfloor k/2 \rfloor \}$, where
\begin{align} \label{eq:composition-kairouz}
\delta_{j} = \frac{\sum_{\ell = 0}^{\frac{k-j}{2}-1} {k \choose \ell} \left( e^{(k-\ell)\varepsilon} - e^{(\ell + j)\varepsilon} \right)}{(1+e^{\varepsilon})^{k}}.
\end{align} 
Although the above bound is tight, commonly used mechanisms, like the Laplace mechanism, do not achieve the entire region described in Theorem~\ref{thm:DPasHP}. Taking into consideration the mechanism's total variation leads to a better bound on its privacy region, as shown in  Figure~\ref{fig:laplace_tradeoff} (details of the computations deferred to Section~\ref{subsec:laplace}).

We illustrate the improved composition bound in Figure~\ref{fig:ExampleComposition}, with $\eps = 1$, $\delta=0$, and $ \eta= 0.7\frac{e^{\eps}-1}{1+e^{\eps}}$ (i.e., $\alpha = 0.3$). Note that the refined composition bound involves lines with slopes of the form $-e^{\pm j\eps}$ for $j \in \{0,1,...,k\}$, while the bound introduced in~\cite{TheCompositionTheoremForDP} only involves values of $j$ such that $0\leq j \leq k$ and $j$ and $k$ have the same parity. Even if one were to only observe the lines that correspond to values of $j$ that have the same parity as $ k $ (blue vs green in Figure~\eqref{fig:ExampleCompositionDominatingMechanisms}), the refined bound still improves on the previous bound.
\begin{figure}[htp]
  \centering
  \includegraphics[scale=0.3]{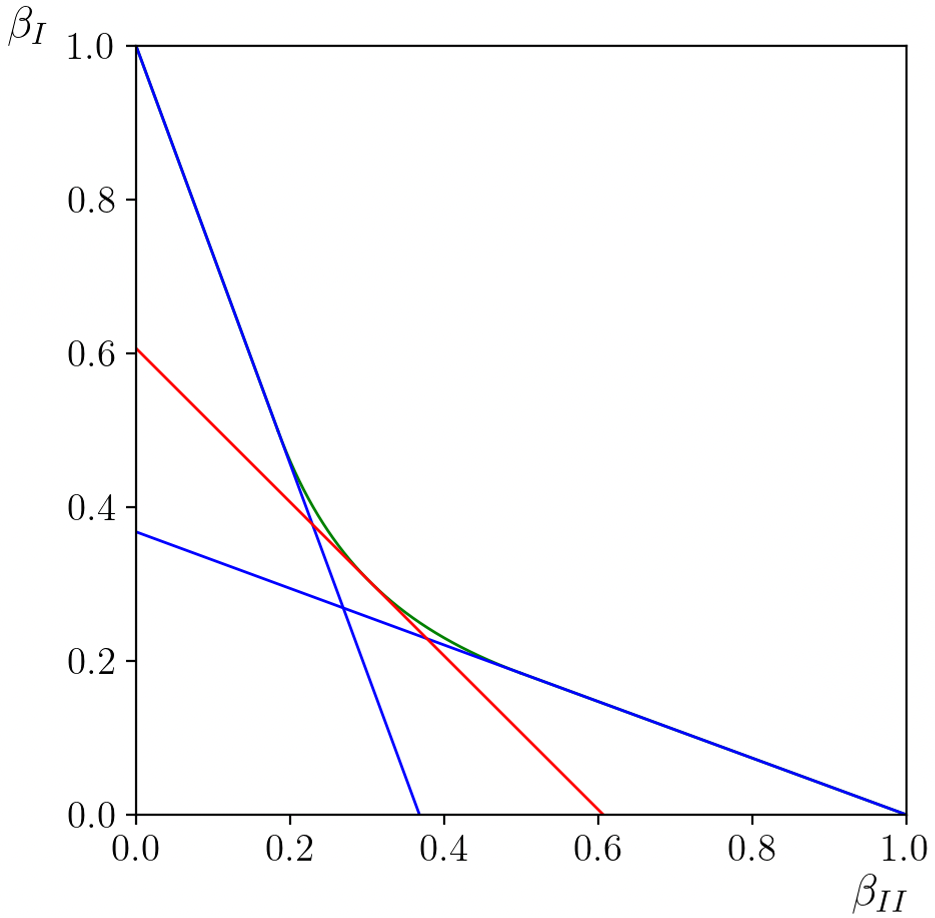} 
  \caption{The Laplace mechanism's exact tradeoff function is shown in green (for $\eps=1$). The blue lines correspond to the differential privacy constraints. The red line corresponds to $\beta_{\I} = 1 - \eta_{\mathrm{Lap}} - \beta_{\II}$ where $\eta_{\mathrm{Lap}} = 1-e^{-\eps/2}$ is the total variation of the Laplace mechanism.}
  \label{fig:laplace_tradeoff}
\end{figure}

\begin{figure}[htp]
\centering
\subfloat[ $k=1.$
]{
\label{fig:ExampleDominatingMechanisms}
\includegraphics[scale=0.3]{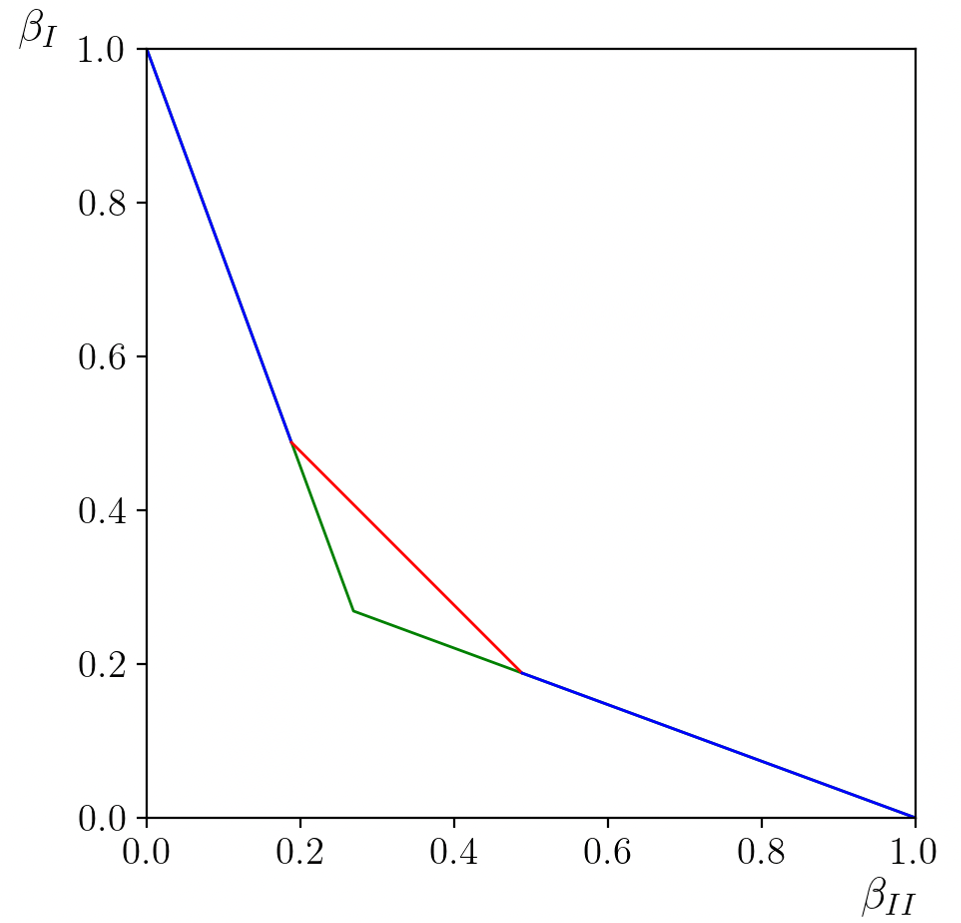}
}  \\
\subfloat[$k=5.$]{
\label{fig:ExampleCompositionDominatingMechanisms}
\includegraphics[scale=0.3]{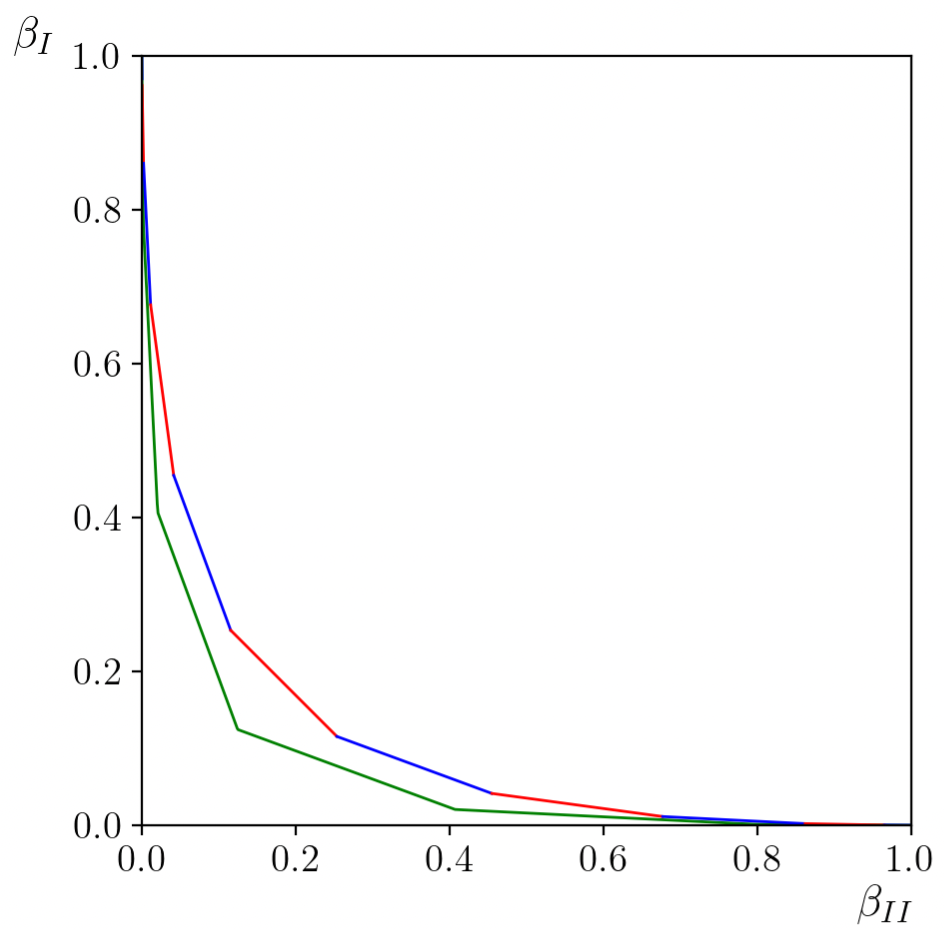}
}
\caption{The induced tradeoff functions of the dominating mechanisms for $\eps = 1$, $\delta = 0$, and $\alpha=0.3$. The green line shows the  tradeoff function of $\eps$-DP (given by equation~\eqref{eq:composition-kairouz}). The blue and red lines plot the tradeoff given by Theorem~\ref{thm:CompositionTVApproximate}, where they correspond, respectively, to odd (same parity as $k=5$) and even choices of $j$ in equation~\eqref{eq:thm-compositionTV}.}
\label{fig:ExampleComposition}
\end{figure}

\begin{Remark}
    Jia \emph{et al.}~\cite[Theorem 12]{TVD_MIA} derived the $k$-fold composition for any $\eta$ with $\eps = \log \frac{1 + \eta}{1 - \eta}$. However, this reduces to the standard $\eps$-DP setting ($\eta = \frac{e^\eps-1}{e^\eps+1}$), which has already been derived by Kairouz \emph{et al}.~\cite{TheCompositionTheoremForDP}. The same applies to~\cite[Theorem 18]{TVD_MIA}. Hence, contrary to our result, the $\eta$ parameter is not used to further bound the privacy region and obtain tighter composition results.
\end{Remark}

\subsection{Discussion}

\subsubsection{Membership Inference Attacks}

Prior works have utilized the notion of total variation to capture an adversary's advantage and to bound vulnerability to Membership Inference Attacks (MIAs), in which the attacker tries to determine whether a given record was part of the model's training data.
 Chatzikokolakis \emph{et al.} \cite{BayesSecurity} studied the notion of Bayes security (denoted $\beta^*$), a metric inspired by cryptographic advantage, and which is proven to be the complement of the total variation of the two maximally distant rows of the channel. The authors also proved a parallel composition result, which states that the composition of two channels that are $\beta^*_1$-secure and $\beta^*_2$-secure yields a $\beta^*_1 \beta^*_2$-secure channel.
    We can recover this result for the special case  $\beta^*_1 =  \beta^*_2$ by considering the total variation of the composed dominating $\eta$-TV mechanism with $\eta = 1-\beta^*$ (or, equivalently, the dominating $(0,\eta)$-DP mechanism), which is $ 1 - (1 - \eta)^k$ (this can be seen by noting that the total variation under composition is $d_{TV} = 1 - (1 - \delta)^k (1 - \delta_0)$, and that $\delta_0 = 0$ in equation~\eqref{eq:composition-kairouz} for $\eps=0$ and $\delta=\eta$). 
 Kulynych \emph{et al.} \cite{DisparateVulnerability} studied the phenomenon of disparate vulnerability to MIAs, i.e., the unequal success rate of MIAs against different population subgroups (or the difference between subgroup vulnerability to MIAs). The authors showed that the worst-case vulnerability of a subgroup to MIAs is bounded by the total variation of the mechanism, and consequently, so is the disparity.

Considering the importance of MIA attacks and subsequently the importance of the total variation of the mechanism, it is crucial to keep track of the evolution of the $\eta$ parameter under composition. Indeed, our result enables accurate tracking of $\eta$ by coupling the $\eta$-TV and $\eps$-DP constraints. By contrast, studying the worst-case evolution of $\eta$ without taking $\eps$ into consideration would yield much looser bounds: $\eta$ would evolve as $1-(1-\eta)^k$. Moreover, from a mechanism design point of view, our framework enables greater flexibility. In particular, when comparing two mechanisms, a designer may want to consider both their $(\eps,\delta)$-parameters as well as their $\eta$-parameters to have a more complete comparison.

 \subsubsection{Differentially Private Stochastic Gradient Descent}

The more refined analysis of the composition of privacy-preserving mechanisms is crucial in studying algorithms like DP-SGD, which involves subsampling followed by a large number of compositions (so that standard composition results yield loose bounds). We derive a subsampling result in Section~\ref{sec:subsampling}, and couple it with our composition result to analyze DP-SGD in Section~\ref{sec:SGD}.
\subsection{Proof} \label{refined_dominating_mechanism}
We propose an $(\eps,\delta)$-differentially private mechanism with total variation $\eta$ that achieves the entire privacy region corresponding to $(\eps,\delta)$-DP and $\eta$-TV (described in Corollary~\ref{corr:DP-TV-HP}, shown in Figure~\ref{fig:roc_dp_tv}), and that therefore \emph{dominates} all $(\eps,\delta)$-differentially private mechanisms with total variation $\eta$. We then derive its privacy region under non-adaptive $k$-fold composition, which consists of the intersection of several $(\eps_j,\delta_j)$-DP privacy regions. Finally, we show that this derived privacy region is the largest achievable privacy region under $k$-fold (adaptive) composition of $(\eps,\delta)$-DP and $\eta$-TV mechanisms, hence proving the tightness of our result.
\subsubsection{Dominating Mechanism} 
Kairouz~\cite{TheCompositionTheoremForDP} \emph{et al.} introduced a mechanism, which we will denote by $M_{\eps,\delta}: \X^m \to \{0,1\}$, that dominates all $(\varepsilon,\delta)$-differentially private mechanisms in the following sense: for any $(\eps,\delta)$-DP mechanism $M: \X^m \to \Y$, for any pair of databases $D_0$ and $D_1$,  
 there exists a (possibly randomized) mapping $T: \{0,1\} \to \Y$ such that, $M(D)$ and $T(M_{\eps,\delta}(D))$ have the same distribution, where $D \in \{D_0,D_1\}$. 
We adopt a similar approach, and provide a mechanism that dominates all $(\varepsilon,\delta)$-differentially private mechanisms with total variation $\eta$. Due to lack of space, we provide herein the proof of the case $\delta = 0$ only and include the details on how we generalize the result to approximate differential privacy in Appendix~\ref{app:general-delta}. 

The following mechanism (as in~\cite{TheCompositionTheoremForDP}) does not depend on the database entries or the query; it only depends on the hypothesis, and will be denoted by $M_{\eps,0,\eta}$. It outputs $X_0 \sim P_0$ in the case of the null hypothesis, and $X_1 \sim P_1$ in the case of the alternative hypothesis, where $P_0$ and $P_1$ are defined as follows, for $\varepsilon \geq 0$ and $\alpha \in [0,1]$:

\begin{align} \label{eq:def-P0-P1}
    \! P_0 \! = \! \begin{cases}
    \frac{(1-\alpha)e^{\varepsilon}}{(1+e^{\varepsilon})}, &  x=0,\\
    \alpha, &  x=1,\\
    \frac{(1-\alpha)}{(1+e^{\varepsilon})}, &  x=2,
    \end{cases} 
~
    P_1 \!= \! \begin{cases}
    \frac{(1-\alpha)}{(1+e^{\varepsilon})}, & x=0,\\
    \alpha, & x=1,\\
    \frac{(1-\alpha)e^{\varepsilon}}{(1+e^{\varepsilon})}, &  x=2.
    \end{cases} 
\end{align}
The above mechanism's total variation is $d_{TV}(P_0,P_1) = \frac{(e^{\varepsilon}-1)(1-\alpha)}{(1+e^{\varepsilon})}$. So for a target $\eta \in [0, \frac{e^\eps-1}{e^\eps+1}]$, we set $\alpha  = 1- \eta \frac{e^\eps+1}{e^\eps-1}$. For $\eta =  \frac{e^\eps-1}{e^\eps+1}$, we recover $M_{\eps,0}$.
It is easy to verify that $f(P_0,P_1)$ exactly achieves the region described in Corollary~\ref{corr:DP-TV-HP}.

\subsubsection{Composed Dominating Mechanism}
\label{composed_dominating_mechanism_epsilon_0}
We show that the region described in Theorem~\ref{thm:CompositionTVApproximate} is achievable by analyzing the non-adaptive composition of $M_{\eps,0,\eta}$, i.e., in the $k$-fold binary hypothesis experiment, we set $M_i = M_{\eps,0,\delta}$ for all $i=1,2,\ldots,k$.

Since the privacy region of the composed mechanism is a convex set, we can describe its boundaries with a set of lines tangent to its envelope. For a slope $-\gamma$, we want to find the largest shift $1-\delta'$ such that the line $\beta_{\I}= -\gamma \beta_{\II} + 1-\delta'$ is below the tradeoff curve. 

For some acceptance set $A$, consider the point $(\beta_{\II}(A),\beta_{\I}(A))$ included in the composed mechanism's privacy region. 
We want $-\gamma \beta_{\II}(A) + 1 - \delta' \leq \beta_{\I}(A)$:
\begin{align}
    \delta' & \geq 1 - \beta_{\I}(A) - \gamma \beta_{\II}(A) \label{inequality_delta}\\
    & = P_{0}(A) - \gamma P_{1}(A).
\end{align}
We choose the smallest $\delta'$ such that the inequality~\eqref{inequality_delta} is satisfied for all $A$, i.e.,
\begin{align*}
\delta' = \max_{\substack{A \in \mathcal{X}}} \{P_0(A) - \gamma P_1(A)\}.
\end{align*}
We represent the output of a $k$-fold composition experiment as a sequence $s^{k} \in \{0,1,2\}^k$, and use the following notations:
\begin{align}
    a = | \{i: s_i^k = 1\}|, \text{ and }
    \ell = | \{i: s_i^k = 2\}|. \label{output_sequence} 
\end{align}
Let $\tilde{P_0}$ and $\tilde{P_1}$ be the probabilities of obtaining a sequence $s^k$ under $k$-fold composition when the true database is $D_0$ and $D_1$, respectively. Since the outputs of the mechanisms are i.i.d.,
\begin{align*}
    \tilde{P_0}(s^k) & = P_0(X=0)^{k-\ell-a} P_0(X=1)^{a} P_0(X=2)^{\ell}  \\
& = \left (\frac{1-\alpha}{1+e^\varepsilon} \right)^{k-a} e^{(k-\ell-a)\varepsilon} \alpha^a,  
\end{align*}
and
\begin{align*}
 \tilde{P_1}(s^k) & = P_1(X=0)^{k-\ell-a} P_1(X=1)^{a} P_1(X=2)^{\ell} \\
&= {\left( \frac{1-\alpha}{1+e^\varepsilon} \right)}^{k-a} e^{\ell \varepsilon} \alpha^a.
\end{align*}
By the Neyman-Pearson lemma~\cite[Theorem 11.7.1]{thomascover}, points on the ROC curve are achieved by tests of the form $\{s^k: \frac{\tilde{P}_0(s^k)}{\tilde{P}_1(s^k)} > T \}$, for some threshold $T$. Letting
\begin{align*}
    E \equiv \{0 \leq \tilde{\varepsilon} \leq \infty: \tilde{P_0}(s_k) = e^{\tilde{\varepsilon}} \tilde{P_1}(s_k) \text{ for some } x \in \mathcal{X} \},
\end{align*}
any $\tilde{\varepsilon} \notin E$ does not contribute to the curve.
\begin{Remark}
    It is sufficient to consider $\tilde{\eps} \geq 0$ as the tradeoff function is symmetric due to the symmetry of $P_0$ and $P_1$. The symmetry is then accounted for by the symmetry of $(\eps,\delta)$-DP.
\end{Remark}
Noting that 
$$
\frac{\tilde{P_0}(s_k)}{\tilde{P_1}(s_k)} = \frac{e^{(k-\ell-a)\varepsilon}}{e^{\ell \varepsilon}}, \text{ it follows } \tilde{\varepsilon} = (k-2\ell-a)\varepsilon.$$
Therefore, the distinct values that $\tilde{\varepsilon}$ can take are of the form $\varepsilon_j = j \varepsilon$ where $j \in \{0,1,...,k\}$. 
To every $\varepsilon_j$ corresponds a $\delta_j$ of the form
\begin{align*}
    \delta_j = \max_{\substack{A \in \mathcal{X}^k}} \{\tilde{P_0}(A) - e^{\varepsilon_j} \tilde{P_1}(A)\},
\end{align*}
where $A$ is the acceptance region.
For a fixed $j$, we consider the sequences $s_k$ such that $\tilde{P_0}(s_k) - e^{j\eps} \tilde{P_1}(s_k) > 0$, i.e. \\
\begin{align*}
    {\left (\frac{1-\alpha}{1+e^\varepsilon} \right)}^{k-a} \alpha^a \left( e^{(k-\ell-a)\varepsilon}  -  e^{(\ell+j)\varepsilon} \right) & > 0.
\end{align*}
There are ${k \choose a} {k-a \choose \ell}$ sequences of the form described in (\ref{output_sequence}), but we are only interested in the sequences that satisfy $ e^{(k-\ell-a)\varepsilon}  -  e^{(\ell+j)\varepsilon} > 0$, or $k-j>2\ell+a$, hence the summation bounds in equation~\eqref{eq:thm-compositionTV}. 

\subsubsection{Converse} Since the tradeoff function of $M_{\eps,0,\eta}$, $f_{{\eps,0,\eta}}$, matches with equality the constraints given in Corollary~\ref{corr:DP-TV-HP}, the tradeoff function of any $\eps$-DP and $\eta$-TV mechanism $M$ must satisfy $f_M(t) \geq f_{{\eps,0,\eta}} (t)$ for all $t \in [0,1]$. Hence by Blackwell's result~\cite[Corollary of Theorem 10]{BlackwellEquivalentExperiments}, $M$ can be simulated from $M_{\eps,0,\delta}$. Given this fact, one can follow exactly the converse proof in~\cite{TheCompositionTheoremForDP} to show that the output of the $k$-fold composition of any $\eps$-DP and $\eta$-TV mechanisms can also be simulated using the (non-adaptive) $k$-fold composition of $M_{\eps,0,\eta}$, hence the corresponding tradeoff function will be greater than the composition of $M_{\eps,0,\eta}$. The intuition is the following: Let $M_{D_1}$ and $M_{D_2}$ be two instances of the dominating $(\eps,\delta)$-DP with $\eta$-TV mechanism.
The adversary chooses a mechanism $M_1$, which is $(\eps,\delta)$-DP with $\eta$-TV, observes its output, and based on it, chooses $M_2$, which is also $(\eps,\delta)$-DP with $\eta$-TV.
$(M_1,M_2)$ can be simulated from $(M_{D_1},M_{D_2})$, since $M_1$ can be simulated from $M_{D_1}$, and $M_2$ can be simulated from $M_{D_2}$.


\section{Subsampling} \label{sec:subsampling}

Subsampling is a simple method to ``amplify'' privacy guarantees: before answering a query on a given database of size $n$, first choose uniformly at random a subset of size $m$, $1 \leq m \leq n$. This procedure will be denoted by $\Sample_m$.  Then, compute the query answer on the sampled database.

\begin{Proposition} \label{prop:subsampling}
    Given $\eps >0$, $\delta \in [0,1]$, and $\eta \in [0,1]$, if $M: \X^m \to \Y$ is ($\eps$,$\delta$)-DP and $\eta$-TV, then the subsampled mechanism $M \circ \Sample_{m}: \X^n \to \Y$ is $p\eta$-TV and ($\log \left( 1+ p \left(e^\eps -1\right) \right), p \delta$)-DP  on $\X^n$, where $m \leq n$ and $p = \frac{m}{n}$. Moreover, this result is tight.
\end{Proposition}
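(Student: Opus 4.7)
The plan is to fix neighboring databases $D_0, D_1 \in \X^n$ differing in a single entry (say at index $i^*$), and to decompose the output distribution of $M \circ \Sample_m(D_b)$ according to whether $i^*$ lies in the uniformly random subset $S$ of size $m$. Since $\Pr(i^* \in S) = m/n = p$, letting $P_b^{\mathrm{in}}$ and $P_b^{\mathrm{out}}$ denote the conditional output distributions given $i^* \in S$ and $i^* \notin S$ respectively, I obtain the mixture decomposition
$$
P_b = p \cdot P_b^{\mathrm{in}} + (1-p) \cdot P_b^{\mathrm{out}}.
$$
The key observation is that $D_0|_S = D_1|_S$ whenever $i^* \notin S$, so $P_0^{\mathrm{out}} = P_1^{\mathrm{out}}$; and for each fixed $S$ containing $i^*$, the restrictions $D_0|_S, D_1|_S$ are neighboring in $\X^m$, so the $\eta$-TV and $(\eps,\delta)$-DP hypotheses on $M$ apply.

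For the total variation claim, I would apply the joint convexity of $d_{TV}$ twice: once to pass from $P_b$ to the $p$-weighted combination above, and once to average over $S$ conditioned on $i^* \in S$. This yields
$$
d_{TV}(P_0, P_1) \leq p \cdot d_{TV}(P_0^{\mathrm{in}}, P_1^{\mathrm{in}}) + (1-p) \cdot 0 \leq p\eta,
$$
since $d_{TV}(M(D_0|_S), M(D_1|_S)) \leq \eta$ for each fixed $S \ni i^*$ by hypothesis.

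For the DP bound, I would follow the classical subsampling argument. For any measurable $T \subseteq \Y$, set $A_b = \Pr(M(D_b|_S) \in T \mid i^* \in S)$ and $C = \Pr(M(D_b|_S) \in T \mid i^* \notin S)$, so that $\Pr(M\circ\Sample_m(D_b)\in T) = p A_b + (1-p) C$. The $(\eps,\delta)$-DP property on $\X^m$, combined with convexity of the $(\eps,\delta)$ inequality across the conditional distribution of $S$ given $i^* \in S$, gives $A_0 \leq e^\eps A_1 + \delta$. A short algebraic manipulation combining this with $P_0^{\mathrm{out}} = P_1^{\mathrm{out}}$ then produces
$$
P_0(T) \leq \bigl(1 + p(e^\eps - 1)\bigr) P_1(T) + p\delta,
$$
which is precisely $\bigl(\log(1 + p(e^\eps - 1)), p\delta\bigr)$-DP. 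For tightness, I would exhibit the dominating mechanism $M_{\eps,\delta,\eta}$ from Section~\ref{refined_dominating_mechanism} evaluated on a carefully chosen pair of neighboring databases in $\X^n$: its $\eta$-TV and $(\eps,\delta)$-DP constraints are saturated on neighboring inputs in $\X^m$ by construction, and the single-entry difference makes the mixture decomposition collapse to the extremal case, so both inequalities become equalities simultaneously.

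The main obstacle I anticipate is the DP algebraic step: to recover the amplified factor $1 + p(e^\eps - 1)$ (rather than the weaker $e^\eps$), one must carefully combine $A_0 \leq e^\eps A_1 + \delta$ with the identity $P_0^{\mathrm{out}} = P_1^{\mathrm{out}}$ through a non-trivial convex combination, essentially writing $P_1$ as a mixture that absorbs the common branch. The TV part, by contrast, drops out immediately from the decomposition, and verifying tightness reduces to tracking when each inequality in the chain is simultaneously achieved by $M_{\eps,\delta,\eta}$.
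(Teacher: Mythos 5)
Your route differs from the paper's: you prove both claims from first principles via the mixture decomposition $P_b = p\,P_b^{\mathrm{in}} + (1-p)\,P_b^{\mathrm{out}}$, whereas the paper simply invokes the known privacy-amplification-by-subsampling theorem (Theorem~29 of the cited Steinke reference) for the DP part, and then dispatches the TV part in one line by observing that $\eta$-TV is the same as $(0,\eta)$-DP and applying the \emph{same} theorem with $\eps=0$ (which gives $\log(1+p(e^0-1))=0$ and $\delta\mapsto p\eta$). Your TV argument via joint convexity of $d_{TV}$ is a clean and fully valid alternative, arguably more transparent than the paper's reduction. Your tightness sketch matches the paper's construction in spirit: the paper instantiates $M$ so that $M(D)\sim P_0$ when a distinguished element $a\in D$ and $M(D)\sim P_1$ otherwise (with $P_0,P_1$ the dominating distributions), and computes the privacy region of the pair $(pP_0+(1-p)P_1,\,P_1)$ after subsampling.

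The one place your sketch is genuinely incomplete is the DP amplification step, and it is more than a ``short algebraic manipulation.'' The inequality $A_0\le e^\eps A_1+\delta$ alone only reproduces $(\eps,\delta)$-DP after averaging, not the amplified parameter $\log(1+p(e^\eps-1))$. To get the factor $1+p(e^\eps-1)$, you need the extra structure that the common branch $C$ itself is $(\eps,\delta)$-close to both $A_0$ and $A_1$, i.e., inequalities of the form $A_0\le e^\eps C+\delta$, $C\le e^\eps A_1+\delta$, etc. These do not follow from fixing a single $S\ni i^*$: a subset $S$ containing $i^*$ and a subset $S'$ not containing $i^*$ need not induce neighboring restrictions $D_0|_S$ and $D_0|_{S'}$. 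One must construct a coupling between the conditional law of $S$ given $i^*\in S$ and the conditional law given $i^*\notin S$ under which the two restricted databases differ in exactly one entry. You correctly flag this as ``the main obstacle,'' but it is precisely the content of the amplification theorem, and any self-contained proof must carry it out explicitly; the paper avoids it by citing the result. With that coupling in hand, your outline goes through.
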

The proof (deferred to Appendix~\ref{app:subsampling}) is based on the privacy amplification by subsampling result for ($\eps$,$\delta$)-DP appearing in \cite{KLNRS, Smith_2009, Chaudhuri_Mishra,balle:2018} and stated in ~\cite[Theorem 29]{steinke2022composition}. As such, $(\eps,\delta)$-DP with $\eta$-TV is closed under subsampling.
Moreover, as can be seen from the proof, it holds in general that if $M$ is $\eta$-TV, then $M \circ \Sample_m$ is $p \eta$-TV. This result was concurrently derived by Jia \emph{et al.}~\cite{TVD_MIA}, but we include a simple proof herein and further show tightness.

\section{Asymptotic Behaviour} \label{sec:limit}

Inspired by the hypothesis testing view of differential privacy~\cite{TheCompositionTheoremForDP,DiffPrivacyAsHT}, Dong \emph{et al.}~\cite{GaussianDP} introduced a new privacy measure, called $f$-differential privacy, based directly on tradeoff functions (i.e., ROC curves):
\begin{Definition}
    [$f$-DP~\cite{GaussianDP}]
    Let $f: [0,1] \to [0,1]$ be a convex, continuous, non-increasing function, satisfying $f(x) \leq 1-x$. A mechanism $M$ is $f$-DP if for all neighboring databases $D_0$ and $D_1$ and $t \in [0,1]$,
    \begin{align}
        \label{eq:def-fDP}
        f(P_0,P_1)(t) \geq f(t),
    \end{align}
    where $M(D_0) \sim P_1$ and $M(D_1) \sim P_1$.
\end{Definition}

As such, $(\eps,\delta)$-DP, as well as $(\eps,\delta)$-DP with $\eta$-TV, are special cases of $f$-DP, with $f$ given by the corresponding curves shown in Figure~\ref{fig:roc_dp_tv}. 

Notably, the composition of $f$-DP mechanisms is also $f$-DP (with a potentially different $f$), i.e., $f$-DP is closed under composition. Moreover, the authors~\cite{GaussianDP} show that composition for $f$-DP follows a central limit theorem-like behavior in the following sense. Given $\mu \in \mathbb{R}$, let
\begin{align}
    \label{eq:def-Gaussain-ROC}
    G_\mu (\alpha) = f( \mathcal{N}(0,1), \mathcal{N} (\mu,1))(\alpha),
\end{align}
for $\alpha \in [0,1]$.
Then, the composition of $n$ $f$-DP mechanisms can be approximated by $G_\mu$ (with an appropriate $\mu$)~\cite[Theorem 3.4]{GaussianDP}. Indeed, under suitable assumptions, the limit of the composition is equal to $G_\mu$~\cite[Theorem 3.5]{GaussianDP}.  For instance, letting $\eps_n = \frac{\mu}{\sqrt{n}}$, then the composition of $n$ $\eps_n$-DP mechanisms converges to $G_\mu$~\cite[Theorem 3.6]{GaussianDP}. Here, we provide an analogous result for $\eps$-DP with $\eta$-TV (proof deferred to Appendix~\ref{app:central-limit}): 
\begin{Theorem}\label{thm:central-limit}
    Consider a triangular array $\{(\eps_{ni},\eta_{ni}): 1 \leq i \leq n \}_{n=1}^{\infty}$ such that $\displaystyle  \lim_{n \rightarrow \infty} \sum_{i=1}^n \eps_{ni} \eta_{ni} = \frac{\mu^2}{2}$ for some $\mu \geq 0$, and $ \displaystyle \lim_{n \rightarrow \infty} \max_{1 \leq i \leq n} \eps_{ni} =0$. Let $M_{ni}: \X^m \to \Y_{ni}$ be an $\eps_{ni}$-DP and $\eta_{ni}$-TV mechanism, and let $M_n (D) = (M_{n1}(D), \ldots, M_{nn}(D))$, $D \in \X^m$. Then, there exists a sequence of functions $\{f_n\}$ such that $M_n(D)$ is $f_n$-DP and
    \begin{align} \label{eq:thm-central-limit}
        \lim_{n \rightarrow \infty} f_n = G_\mu,
    \end{align} 
    uniformly over $[0,1]$, where $G_\mu$ is defined in Eq.~\eqref{eq:def-Gaussain-ROC}.
\end{Theorem}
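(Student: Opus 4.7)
The plan is to reduce the problem to analyzing the composition of the explicit dominating mechanisms from Section~\ref{refined_dominating_mechanism}, and then invoke the $f$-DP central limit theorem of Dong \emph{et al.}~\cite[Theorem 3.4]{GaussianDP}. Concretely, for each $n$ and each $i \in \{1,\ldots,n\}$, let $(P_{0,ni},P_{1,ni})$ be the pair defined in~\eqref{eq:def-P0-P1} with parameters $(\eps_{ni},\eta_{ni})$, set $f_{ni} = f(P_{0,ni},P_{1,ni})$, and take $f_n = f_{n1} \otimes \cdots \otimes f_{nn}$, i.e., the tradeoff function of the $n$-fold product experiment. The Blackwell-based converse used in the proof of Theorem~\ref{thm:CompositionTVApproximate} guarantees that every $\eps_{ni}$-DP and $\eta_{ni}$-TV mechanism can be simulated from $M_{\eps_{ni},0,\eta_{ni}}$, so $M_n$ is $f_n$-DP and it suffices to show $f_n \to G_\mu$ uniformly on $[0,1]$.

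The workhorse of the argument is a direct moment computation for the log-likelihood ratio $L_{ni} = \log(P_{0,ni}/P_{1,ni})$. By~\eqref{eq:def-P0-P1}, $L_{ni}$ takes values $\{+\eps_{ni},\, 0,\, -\eps_{ni}\}$ under $P_{0,ni}$ with respective probabilities $\{(1-\alpha_{ni})e^{\eps_{ni}}/(1+e^{\eps_{ni}}),\, \alpha_{ni},\, (1-\alpha_{ni})/(1+e^{\eps_{ni}})\}$, where $\alpha_{ni} = 1 - \eta_{ni}(1+e^{\eps_{ni}})/(e^{\eps_{ni}}-1)$. A short calculation then yields the clean identity
\begin{align*}
\mathrm{KL}(P_{0,ni} \| P_{1,ni}) = \mathrm{KL}(P_{1,ni} \| P_{0,ni}) = \eps_{ni}\,\eta_{ni},
\end{align*}
together with $\E[L_{ni}^2] = (1-\alpha_{ni})\eps_{ni}^2 = \eta_{ni}\eps_{ni}^2(1+e^{\eps_{ni}})/(e^{\eps_{ni}}-1)$ and $\E[|L_{ni}|^3] = (1-\alpha_{ni})\eps_{ni}^3$. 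Using these, I would verify the hypotheses of~\cite[Theorem 3.4]{GaussianDP}: (i) $\sum_i \mathrm{KL}(P_{0,ni}\|P_{1,ni}) = \sum_i \eps_{ni}\eta_{ni} \to \mu^2/2$, matching the KL divergence on either side of $G_\mu$; (ii) since $\max_i \eps_{ni} \to 0$, one has $(1+e^{\eps_{ni}})/(e^{\eps_{ni}}-1) = (2/\eps_{ni})(1+o(1))$ uniformly in $i$, hence $\sum_i \E[L_{ni}^2] = 2\sum_i \eps_{ni}\eta_{ni}(1+o(1)) \to \mu^2$, matching the variance scale of $G_\mu$; and (iii) the Lindeberg-type remainder is controlled via $\sum_i \E[|L_{ni}|^3] \leq (\max_i \eps_{ni})\sum_i \E[L_{ni}^2] \to 0$. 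The symmetry of each $f_{ni}$ required by the CLT is immediate from the symmetric construction of $(P_{0,ni},P_{1,ni})$.

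The main obstacle I anticipate is the translation between the tradeoff-function functionals $kl(f_i), \kappa_2(f_i), \kappa_3(f_i)$ in which~\cite[Theorem 3.4]{GaussianDP} is phrased and the distribution-level moments computed above: the piecewise-linear shape of $f_{ni}$ (featuring a flat segment of length $\alpha_{ni}$ coming from the atom $L_{ni}=0$) forces one to interpret $f_{ni}'$ as the discrete density of a two-point log-likelihood ratio, and the resulting delta-function pieces must be handled carefully so that the $kl$, $\kappa_2$, and $\kappa_3$ integrals reduce exactly to the moments above. Once this bookkeeping is in place, the remainder of the argument runs in parallel with~\cite[Theorem 3.6]{GaussianDP}, which is recovered as the special case $\eta_{ni} = (e^{\eps_{ni}}-1)/(e^{\eps_{ni}}+1)$.
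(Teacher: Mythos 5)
Your proposal is correct and follows essentially the same route as the paper: pass to the tradeoff functions of the dominating mechanisms $(P_{0,ni},P_{1,ni})$, compute the functionals $kl$, $\kappa_2$, $\kappa_3$, verify that $\sum_i kl \to \mu^2/2$, $\max_i kl \to 0$, $\sum_i\kappa_2\to\mu^2$, $\sum_i\kappa_3\to 0$, and invoke the $f$-DP CLT of Dong \emph{et al.} (the paper uses their Theorem 3.5, the exact-limit version, rather than the Berry--Esseen Theorem 3.4 you cite, but both get you there). The obstacle you anticipate is in fact a non-issue: since $f_{ni}'$ takes only the three values $-e^{\eps_{ni}}$, $-1$, $-e^{-\eps_{ni}}$ on intervals of lengths $\frac{1-\alpha_{ni}}{1+e^{\eps_{ni}}}$, $\alpha_{ni}$, $\frac{(1-\alpha_{ni})e^{\eps_{ni}}}{1+e^{\eps_{ni}}}$, the integrals defining $kl$, $\kappa_2$, $\kappa_3$ are elementary finite sums that coincide term-by-term with your $P_{0,ni}$-moments of the log-likelihood ratio (the atom at $L_{ni}=0$ contributes zero to each), so no distributional derivative or delta-function bookkeeping is required.
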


For pure differential privacy, $\eta = \frac{e^\eps-1}{e^\eps+1} \approx \frac{\eps}{2}$ for small $\eps$, so that the above result matches Theorem~3.6 of~\cite{GaussianDP}. However, if for small $\eps$, $\eta \approx \frac{\eps}{c_0}$ for some $c_0 > 2$, then: if Theorem~3.6 of~\cite{GaussianDP} yields a parameter $\mu$, Theorem 3 will yield $\mu \sqrt{ \frac{2}{c_0}}$. It is worth noting that one can always reduce the total variation of a mechanism by a factor $(1-\delta')$ by passing its output through an erasure channel with parameter $\delta'$.
Moreover, let $c_1 \in (0,1/2)$, $\eps_{ni} = \frac{\mu}{n^{c_{1}}}$, and $\eta_{ni} = \frac{\mu}{2 n^{1-c_1}}$, for all $1 \leq i \leq n$. Then, the conditions of the theorem are satisfied, and the limit of the composition is $G_\mu$. As opposed to decaying as $1/\sqrt{n}$, $\eps_n$ could decay with an arbitrarily small power of $n$, as long as this is ``compensated'' for with a faster decay for the total variation.
Finally, it is worth noting that, as compared to the $f$-DP framework, $(\eps,\delta)$-DP with $\eta$-TV is significantly simpler as we keep track of only three parameters, while still yielding considerable advantages over $(\eps,\delta)$-DP. Moreover, it is easier to work with from a design perspective (choosing $\eps$ and $\eta$ parameters versus choosing a function).
\begin{Remark}
  Although Theorem~\ref{thm:central-limit} was stated for non-adaptive composition, one could allow adaptive choices of $M_{ni}$ and retain the same result (this is due to a tightness result by Dong \emph{et al.}~\cite[Theorem 3.2]{GaussianDP}).   
\end{Remark}

\section{Applications} \label{sec:app}
In this section, we compute the total variation for the Laplace, Gaussian, and staircase mechanisms, to demonstrate improvement and/or simplicity of analysis for commonly used mechanisms. Using well-known bounds on the total variation (e.g., Pinsker's inequality), one could also derive simple closed-form bounds on the total variation of the composition. While one could efficiently approximate the composed tradeoff functions of mechanisms using Fourier-based methods~\cite{TightDP_KJPH, NumericalCompositionGopi, OptimalAccountingCharacteristicFunction}, doing so requires the CDF of the ``privacy loss'' random variable and may not lead to a closed form. We also discuss an additional use-case for keeping track of the total variation of a mechanism in the context of Membership Inference Attacks.

\subsection{Laplace Mechanism} \label{subsec:laplace}
The probability density function of a Laplace distribution centered at $0$ and with scale $b$ is
\begin{align}
    Lap(x|b) = \frac{1}{2b} \exp(-\frac{|x|}{b}),
\end{align}
and will be denoted by $Lap(b)$. 
For a (query) function $q: \mathcal{X}^{m} \rightarrow \mathbb{R}$, the output of the Laplace mechanism, denoted $M_\mathrm{Lap}$, is $M_\mathrm{Lap}(X)= q(X) + Y$ where $Y \sim Lap(\Delta/\eps)$, and the query sensitivity $\Delta$ is defined as
\begin{equation}
    \Delta = \max_{\substack{D \text{ and } D' \text{ neighbours}}} |q(D)-q(D')|.
\end{equation}
$M_\mathrm{Lap}$ is $\eps$-DP and $d_{TV}(M_\mathrm{Lap}) = 1-e^{-\frac{\eps}{2}}$ (derivation in Appendix \ref{app:laplace}). This result was concurrently derived by Jia \emph{et al.} \cite{TVD_MIA} and Chatzikokolakis \emph{et al.} \cite{BayesSecurity}.
\subsection{Gaussian Mechanism}
The total variation of a Gaussian mechanism operating on a statistic $\theta$ with variance $\sigma^2 = \frac{sens(\theta)^2}{\mu^2}$ for $\mu \geq 0$ is
\begin{align}
    d_{TV} & = 2\Phi\Bigl(\frac{\mu}{2}\Bigr)-1,
\end{align}
where $\Phi$ is the Standard Normal CDF. This result was proven by Jia \emph{et al.} \cite{TVD_MIA} and Chatzikokolakis \emph{et al.} \cite{BayesSecurity}, and appeared implicitly in \cite{Improving-the-Gaussian-Mechanism}. We provide an alternative derivation in Appendix \ref{app:gaussian}.
\subsection{Staircase Mechanism}
Geng \emph{et al.}~\cite{TheOptimalMechanismInDifferentialPrivacy} studied privacy-utility tradeoffs using differential privacy. They demonstrated that, for a large family of utility functions, the optimal noise-adding mechanism is the ``staircase mechanism''. That is, the noise is drawn from a staircase-shaped probability distribution with probability density function
\begin{equation*} f_{\gamma}(x) = 
  \left\{
    \begin{aligned}
        & a(\gamma), && x \in [0,\gamma\Delta), \\
        & e^{-\eps} a(\gamma), &&  x \in [\gamma\Delta,\Delta), \\
        & e^{-k\eps}f_{\gamma}(x-k\Delta), &&  [k\Delta,(k+1)\Delta), ~ k \in \mathbb{N}, \\
        & f_{\gamma}(-x), && x<0.
    \end{aligned}
  \right.
\end{equation*}
where $\displaystyle a(\gamma) = \frac{1 - e^{-\varepsilon}}{2\Delta(\gamma + e^{-\varepsilon}(1 - \gamma))}$, $\gamma \in \mathbb{R}^{+}$. 
The total variation of the staircase mechanism is given by (derivation in Appendix~\ref{app:staircase}):
\begin{equation*} d_{TV}  \!\! = \!\! 
  \left\{
    \begin{aligned}
        & \frac{(1 - e^{-\eps}) (2 \gamma (1-e^{-\varepsilon}) + e^{-\varepsilon})}{2(\gamma + e^{-\varepsilon}(1 - \gamma))}, &\gamma \in [0,\frac{1}{2}), \\
        & \frac{1 - e^{-\varepsilon}}{2(\gamma + e^{-\varepsilon}(1 - \gamma))}, & \gamma \in [\frac{1}{2}, \infty).
    \end{aligned}
  \right.
\end{equation*}

Remarkably, the staircase mechanism with total variation $\eta$ achieves exactly the privacy region corresponding to $(\eps,0)$-DP and $\eta$-TV (proof in Appendix~\ref{app:staircase-region}).

For $\gamma = \frac{1}{2}$, $d_{TV} = \frac{e^{\varepsilon} - 1}{e^{\varepsilon}+1}$, which corresponds to the total variation of the $(\varepsilon,0)$-differentially private dominating mechanism introduced by Kairouz \emph{et al.}~\cite{TheCompositionTheoremForDP}, or to that of the mechanism that we introduced in Section~\ref{refined_dominating_mechanism} for $\alpha = 0$. For $\gamma=0.0139$, $\alpha \approx 0.3$, the corresponding tradeoff function consists of the blue and red lines in Figure~\ref{fig:ExampleDominatingMechanisms}, and therefore the composition exactly corresponds to the blue and red lines in Figure~\ref{fig:ExampleCompositionDominatingMechanisms}. 

\subsection{Differentially Private Stochastic Gradient Descent} \label{sec:SGD}

We consider the noisy SGD algorithm introduced in~\cite{abadi:16}. Given a dataset of size $n$, at every step, noisy SGD generates a uniformly random subsample of size $m$, computes the corresponding gradient, clips it (in $\ell_2$-norm), then adds i.i.d Gaussian noise to each component of the gradient. Dong \emph{et al.}~\cite{GaussianDP} consider and analyze the same setup for comparison purposes, which we adopt here. 

Given specific parameters of the noisy SGD algorithm (batch size $m = 256$, learning rate $=0.25$, and clipping threshold $=1.5$), Dong \emph{et al.}~\cite{GaussianDP} show that every step satisfies $\frac{1}{1.3}$-Gaussian differential privacy (GDP)~\cite[Definition 2.6]{GaussianDP}. For every choice of $\eps$, one obtains a corresponding $\delta$ using~\cite[Corollary 2.13]{GaussianDP}. In particular, if a mechanism is $\mu$-GDP then it is$(\eps, \delta(\eps))$-DP for
\begin{align}
    \delta(\eps) = \Phi\left( -\frac{\eps}{\mu}+ \frac{\mu}{2} \right)- e^\eps \Phi \left( -\frac{-\eps}{\mu} - \frac{\mu}{2} \right),
\end{align}
where $\Phi$ is the CDF of the standard normal distribution~\cite[Corollary 2.13]{GaussianDP}. As such,
\begin{align}
\eta  = \delta(0) = \Phi \left( \frac{\mu}{2}\right) - e^\eps \Phi\left( -\frac{\mu}{2} \right).
\end{align}

For every choice of $(\eps,\delta)$, we may then use Theorem~\ref{thm:CompositionTVApproximate} to derive an outer bound on the composition of noisy SGD iterations. Moreover, we consider the intersection of all such regions (for all chosen pairs $(\eps,\delta)$). The result is illustrated in Figure~\ref{fig:SGD}.
\begin{figure}[htp]
    \centering
    \includegraphics[scale = 0.3]{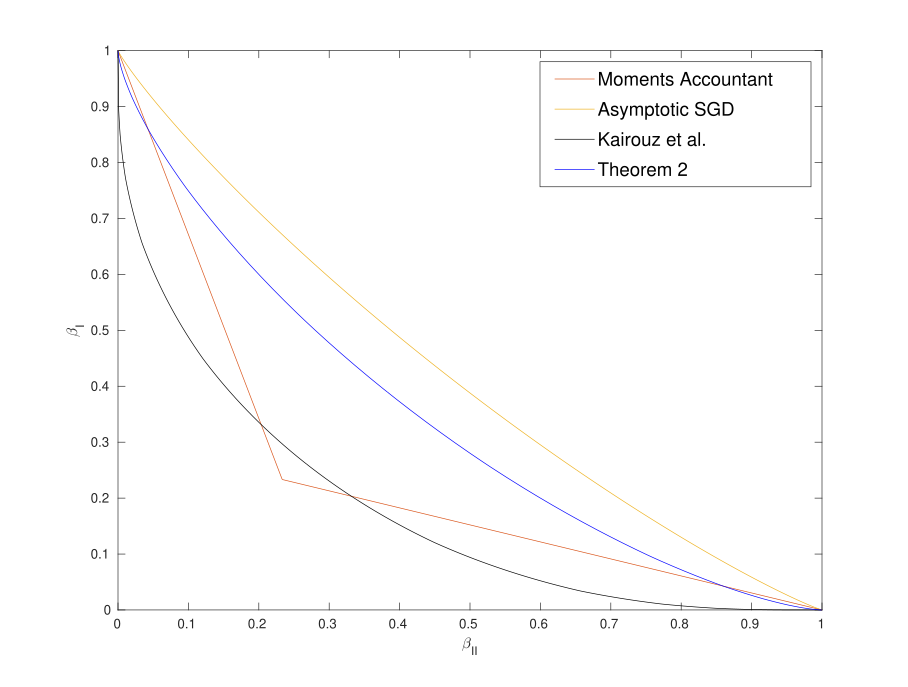}
    \caption{Comparison of the resulting composition bounds: dataset size $n = 60,000$, batch size $m = 256$, learning rate $ = 0.25$, clipping threshold $= 1.5$. Noisy-SGD algorithm running for 15 epochs. Moments accountant refers to the method developed by Abadi \emph{et al.}~\cite{abadi:16} which yields $(1.19,10^{-5})$-DP; Asymptotic SGD corresponds to the asymptotic result of~\cite[Corollary 5.4]{GaussianDP} (both of these curves were retrieved from~\cite{GaussianDP}); Theorem 2 is applied for all $\eps \in [0.5, 3.4]$ in steps of 0.1 while using $\eta$ of the mechanism; Kairouz \emph{et al.}~\cite{TheCompositionTheoremForDP} is applied for the same values of $\eps$.}
    \label{fig:SGD}
\end{figure}

As can be seen from the figure, our method yields a significant improvement over both Abadi \emph{et al.}'s as well as Kairouz \emph{et al.}'s result~\cite{abadi:16,TheCompositionTheoremForDP}. Although our region is looser than the region of Dong \emph{et al.}~\cite{GaussianDP}, their result only holds asymptotically for a specific relationship between the sampling rate and the number of epochs, whereas we obtain a finite-$n$ analysis. It is worth noting that our computations involve an approximation of Theorem~\ref{thm:CompositionTVApproximate} to avoid numerical issues. In particular, we used the method of types~\cite{thomascover}, and approximate the probability of a type using KL divergence~\cite[Exercise I.2]{polyanskiy:23}

\section{Total Variation with Local Differential Privacy}
\label{sec:localDP}

The local privacy setting refers to the case in which there is no ``trusted'' data curator. Hence, every data owner shares with the curator \emph{noisy} versions of their data. In this setting, local differential privacy has emerged as the main framework to guarantee that the noisy versions are indeed private~\cite{duchi2014local}. Herein, we have ``opposing'' goals of the data curator and the data owners. For instance, suppose the data corresponds to ``web browsing history'', then every data owner would like to keep their browsing history private, while the (untrusted) data curator (who is typically offering a service to the data owners) would like to perform statistical analyses on the data. As such, privacy-utility tradeoffs correspond to finding a  mechanism that satisfies a given privacy constraint, yet maximizes a certain statistical utility. One can pose a simplified abstract problem as follows: given two distributions $P_0$ and $P_1$, and a privatization mechanism $Q$, let $M_0$ and $M_1$ be the induced distributions. Suppose the data curator measures utility through a function $U(M_0 || M_1)$. Then we need
\begin{align}
    \max_{Q: Q \text{ satisfies privacy constraint}} U(M_0 || M_1).
\end{align}
A common choice for $U$ is an $f$-divergence (e.g., KL, $\chi^2$, etc.). In practice, $\eps$-LDP analysis often lead to pessimistic results~\cite{duchi2014local,BayesSecurity}. Moreover, the analysis of commonly-used mechanisms (such as $M$-ary randomized response) may be loose in the LDP framework. 

Herein, we study the notion of total variation of a mechanism in the local privacy setting, while maintaining the $\eps$-LDP constraint. This offers more precise results for analyzing mechanisms, and enables greater flexibility in mechanism design (for instance, one may increase $\eps$ and decrease $\eta$ to obtain an acceptable privacy region and improve utility). We mainly focus on $f$-divergences as utility functions, and study them through the lens of contraction coefficients as done in~\cite{asoodeh:22}. 

\subsection{Preliminaries}

Given a convex function $f: [0, \infty) \rightarrow \mathbb{R}$ satisfying $f(1) = 0$, the $f$-divergence between two distributions $P_0$ and $P_1$ (over a common alphabet $\X$) is defined by 
\begin{align}
\label{eq:def-fdiv}
    D_f(P_0||P_1) =\E_Q \left[ \frac{dP_0}{dP_1} \right],
\end{align}
where $\frac{dP_0}{dP_1}$ is the Radon-Nikodym derivative. The family of $f$-divergences includes commonly used divergences such as: total variation distance (by choosing $f(t) = \max\{0,t-1\}$), KL divergence ($f(t) = t\log t$), and $\chi^2$-divergence ($f(t) = (t-1)^2$). It is well known that $f$-divergences satisfy the data processing inequality. That is, given $P_0$, $P_1$ and a channel $Q_{Y|X}$, let $M_0$ and $M_1$ be the induced marginal distributions over $\Y$ respectively, i.e., for all $S \subseteq \Y$,
\begin{align}
    M_{\nu}(S) \equiv \sum_{x \in \mathcal{X}} Q(S|x)P_{\nu}(x),
\end{align}
where $\nu \in \{0,1\}.$ Then, the data processing inequality states that 
\begin{align} \label{eq:basic-data-processing}
D_f(M_0 || M_1) \leq D_f(P_0 || P_1).    
\end{align}
Contraction coefficients are concerned with strength the above inequality to $D_f(M_0 || M_1) \leq \eta_f(Q) D_f(P_0 || P_1) $ with $\eta_f(Q) < 1$. To that end, the contraction coefficient for $f$-divergence of a given channel $Q$ is defined as 
\begin{align}
    \label{eq:def-contraction}
    \eta_f(Q) = \sup_{P_0,P_1: D_f(P_0 || P_1) \neq 0} \frac{D_f(M_0 || M_1)}{D_f(P_0||P_1)}.
\end{align}

\subsection{Definitions and Preliminary Results}

\begin{Definition}[Local Differential Privacy] \label{def:LDP}
    For $\eps \geq 0$, a mechanism $Q$ is $\eps$-locally differentially private if 
    \begin{align}
        \sup_{S \subseteq \mathcal{Y}, x, x' \in \mathcal{X}} \frac{Q(S|x)}{Q(S|x')} \leq e^{\eps}.
    \end{align}
\end{Definition}

The total variation of a mechanism in given by:
\begin{Definition}
    \label{def:LocalTV}
    Given $\eta \in [0,1]$, a mechanism $Q_{Y|X}$ is said to be $\eta$-TV locally (or $d_{TV}(Q) \leq \eta$ for short) if 
    \begin{align} \label{eq:def-localtv}
        \max_{x, x'} \, d_{TV}\left(Q(. | X = x), Q(.| X = x')\right) = \eta.
    \end{align}
\end{Definition}
In other words, the Dobrushin (contraction) coefficient~\cite{dobrushin:56} of $Q$, typically denoted by $\eta_{TV}(Q)$, is given by $\eta$, i.e., $\eta_{TV}(Q) = \eta$.  The connection to contraction coefficients is reminiscent of the recent results on the connection between local differential privacy and the contract coefficients for $E_\gamma$ divergences~\cite{asoodeh:21,zamanlooy:23}.

Given $\eps >0$ and an $\eps$-LDP channel Q, it is known that~\cite[Remark A.1]{TheCompositionTheoremForDP}
\begin{align} \label{eq:TV-bound}
    d_{TV}(Q) \leq \frac{e^\eps+1}{e^\eps-1}.
\end{align}
Now given $\eps > 0$ and $ 0 \leq \eta \leq \frac{e^\eps-1}{e^\eps+1}$, define
\begin{align}
    \label{eq:def-Q-family}
    \Q_{\eps,\eta} = \{ Q: Q \text{ satisfies } \eps\text{-LDP and }\eta\text{-TV} \}.
\end{align}
Given a collection of distributions $P_0, P_1, \ldots, P_{M-1}$, we slightly abuse notation and use the tuple $(P_0,P_1,\ldots,P_{M-1})$ to denote the $M$-ary channel $Q: \{0,1,\ldots,M-1\} \rightarrow \Y$ defined by $Q(y|i)= P_i(y)$, for all $i \in \{0,1,\ldots,M-1\}$ and $y \in \Y$.

The distributions defined in equation~\eqref{eq:def-P0-P1} will play a special role, and will be denoted by $P_0^\star$ and $P_1^\star$. That is,
\begin{align}
    \label{eq:def-P0-P1-star}
    \begin{bmatrix}
        P_0^\star \\
        P_1^\star
    \end{bmatrix} =
    \begin{bmatrix}
        \eta \frac{e^\eps}{e^\eps-1} & \eta \frac{1}{e^\eps-1} & 1 - \eta \frac{e^\eps+1}{e^\eps-1} \\
        \eta \frac{1}{e^\eps-1} & \eta \frac{e^\eps}{e^\eps-1} & 1 - \eta \frac{e^\eps+1}{e^\eps-1}
    \end{bmatrix},
\end{align}
where for notational convenience, we suppressed the dependence of $P_0^\star$ and $P_1^\star$ on $\eps$ and $\eta$. The corresponding binary-input channel will be denoted by $Q^\star_{\eps,\eta}: \{0,1\} \rightarrow \{0,1,2\}$, i.e., for all $y \in \{0,1,2\}$,
\begin{align}
    \label{eq:def-Qstar}
    Q^\star_{\eps,\eta}(y|0) = P_0(y) \text{ and } Q^\star_{\eps,\eta}(y|1) = P_1(y). 
\end{align}
 $Q^\star_{\eps,\eta}$ will be referred to as the dominating mechanism.

Considering the equivalence with the Dobrushin contraction coefficient, we get the following result:
\begin{Proposition}\label{prop:TV-contraction}
    Given distributions $P_0$ and $P_1$ over a common alphabet $\Y$, $\eps \geq 0$, and $0 \leq \eta \leq \frac{e^\eps-1}{e^\eps+1}$,
    \begin{align}
        \label{eq:prop-TV-contraction}
        \sup_{Q \in \Q_{\eps,\eta}} d_{TV} (M_0, M_1) = \eta d_{TV} (P_0, P_1).
    \end{align}
    Moreover, the supremum is achieved by the \emph{binary mechanism with erasure} $Q^{(be)}: \Y \rightarrow \{0,1,2\}$ defined by
    \begin{align}
        Q^{(be)}(.|y) = Q^\star_{\eps,\eta} \left(.| \mathbf{1} \{P_1(y) > P_0(y) \} \right),
    \end{align}
    where $\mathbf{1}\{\}$ is the indicator function and $Q^\star_{\eps,\eta}$ is the dominating mechanism defined in equation~\eqref{eq:def-Qstar}.
\end{Proposition}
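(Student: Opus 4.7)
The plan is to prove the statement in two parts: an upper bound coming from the Dobrushin coefficient interpretation, and a matching lower bound via direct computation on the binary mechanism with erasure.

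For the upper bound, I would invoke the classical data-processing-style bound for total variation: for any channel $Q$ with Dobrushin coefficient $\eta_{TV}(Q)$,
\begin{equation*}
    d_{TV}(M_0, M_1) \leq \eta_{TV}(Q)\, d_{TV}(P_0, P_1).
\end{equation*}
Since $Q \in \Q_{\eps,\eta}$ implies, by Definition~\ref{def:LocalTV}, that $\eta_{TV}(Q) \leq \eta$, this immediately yields $d_{TV}(M_0, M_1) \leq \eta\, d_{TV}(P_0, P_1)$. If the Dobrushin identity is not considered previously established in the paper, I would supply a short proof by writing $P_0 - P_1$ in its Jordan (Hahn) decomposition, applying $Q$ componentwise, and using that any two rows of $Q$ are at most $\eta$ in TV.

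For achievability, I would compute $d_{TV}(M_0^{(be)}, M_1^{(be)})$ directly. Let $A = \{y \in \Y : P_1(y) > P_0(y)\}$, so that $P_1(A) - P_0(A) = d_{TV}(P_0, P_1)$. By definition of $Q^{(be)}$,
\begin{align*}
    M_\nu^{(be)}(z)
    = P_0^\star(z)\, P_\nu(A^c) + P_1^\star(z)\, P_\nu(A),
    \qquad \nu \in \{0,1\}.
\end{align*}
Subtracting, the factor $P_0(A^c) - P_1(A^c) = -(P_0(A) - P_1(A))$ leads to
\begin{align*}
    M_0^{(be)}(z) - M_1^{(be)}(z) = \bigl(P_0^\star(z) - P_1^\star(z)\bigr)\bigl(P_1(A) - P_0(A)\bigr),
\end{align*}
so that summing absolute values gives $d_{TV}(M_0^{(be)}, M_1^{(be)}) = d_{TV}(P_0^\star, P_1^\star)\, d_{TV}(P_0, P_1)$. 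Finally I would observe that, by the choice of $\alpha$ in the dominating mechanism of Section~\ref{refined_dominating_mechanism} (reflected in the parametrization of $P_0^\star, P_1^\star$ in equation~\eqref{eq:def-P0-P1-star}), we have $d_{TV}(P_0^\star, P_1^\star) = \eta$, matching the upper bound.

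The only nonroutine step is to confirm that $Q^{(be)} \in \Q_{\eps,\eta}$, which I would dispatch by noting that each row $Q^{(be)}(\cdot | y)$ equals either $P_0^\star$ or $P_1^\star$; hence the worst pair of rows is $(P_0^\star, P_1^\star)$, which satisfies $\eps$-LDP (all likelihood ratios are in $\{1, e^{\pm \eps}\}$ on the support) and $\eta$-TV by construction. I do not anticipate a serious obstacle beyond being careful with the Hahn-decomposition argument in the upper bound; the main subtlety is just making sure the sign convention aligns so that choosing $A$ to be the set where $P_1 > P_0$ is precisely what makes the mechanism extremal.
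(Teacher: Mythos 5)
Your proof is correct and follows essentially the same route as the paper's: the upper bound via the Dobrushin contraction coefficient and Definition~\ref{def:LocalTV}, and achievability by directly computing the marginals induced by $Q^{(be)}$ (your factored form $M_0^{(be)}(z)-M_1^{(be)}(z)=(P_0^\star(z)-P_1^\star(z))(P_1(A)-P_0(A))$ is a slightly cleaner packaging of the same term-by-term computation the paper carries out). The feasibility check that $Q^{(be)}\in\Q_{\eps,\eta}$ is correctly reduced to verifying $(P_0^\star,P_1^\star)$ itself, as you do.
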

This recovers and generalizes a result by Kairouz \emph{et al.}~\cite[Theorem 6, Corollary 11]{ExtremalMechanisms}, which performs the optimization over all $\eps$-LDP channels, i.e., it corresponds to $\eta = \frac{e^\eps-1}{e^\eps+1}$ (proof deferred to Appendix~\ref{app:prop-TV-contraction}).

The following result will be very useful in the sequel.
\begin{Theorem}
    \label{lem:max-fdiv}
    Given a convex function $f: [0,\infty) \rightarrow \mathbb{R}_+$ satisfying $f(1)=0$, $\eps \geq 0$, and $0 \leq \eta \leq \frac{e^\eps-1}{e^\eps+1}$, 
    \begin{align}
        \label{eq:lem-max-fdiv}
        \sup_{ \substack{(P_0,P_1) \in \Q_{\eps,\eta}} } \!\!\!\!\!\! D_f(P_0 || P_1) & = \! D_f( P_0^\star || P_1^\star) \notag \\
        & = \! \frac{\eta \left(  f \left(e^\eps \right) + e^\eps f \left( e^{-\eps} \right) \right)}{e^\eps-1} .
    \end{align}
\end{Theorem}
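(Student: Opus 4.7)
The plan is to reduce the problem to a one-dimensional optimization over the likelihood ratio $R = dP_0/dP_1$, and then to bound $f(R)$ pointwise using convexity.

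First, I would rewrite the $f$-divergence as
\begin{align*}
    D_f(P_0 \| P_1) = \mathbb{E}_{Y \sim P_1}\!\left[ f\!\left( R(Y) \right) \right], \qquad R(y) = \frac{dP_0}{dP_1}(y),
\end{align*}
and translate the three defining constraints of $(P_0,P_1) \in \Q_{\eps,\eta}$ (viewed as a binary channel with uniform input) into constraints on the $P_1$-law of $R$: the $\eps$-LDP condition gives $R \in [e^{-\eps}, e^\eps]$ almost surely, $P_0$ being a probability measure gives $\mathbb{E}_{P_1}[R]=1$, and the $\eta$-TV condition gives $\mathbb{E}_{P_1}[|R-1|] = 2 \, d_{TV}(P_0,P_1) \leq 2\eta$.

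Next, I would exploit convexity of $f$ together with $f(1)=0$ to obtain a two-piece secant upper bound on $[e^{-\eps}, e^\eps]$:
\begin{align*}
    f(r) \leq \frac{(r-1)^+}{e^\eps - 1}\, f(e^\eps) \;+\; \frac{(1-r)^+}{1 - e^{-\eps}}\, f(e^{-\eps}).
\end{align*}
Taking expectations under $P_1$ and using that $\mathbb{E}_{P_1}[R]=1$ implies $\mathbb{E}_{P_1}[(R-1)^+] = \mathbb{E}_{P_1}[(1-R)^+] = d_{TV}(P_0,P_1) \leq \eta$, I would conclude
\begin{align*}
    D_f(P_0 \| P_1) \leq \frac{\eta \, f(e^\eps)}{e^\eps - 1} + \frac{\eta \, f(e^{-\eps})}{1 - e^{-\eps}} = \frac{\eta \left( f(e^\eps) + e^\eps f(e^{-\eps}) \right)}{e^\eps - 1},
\end{align*}
which is exactly the claimed expression. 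Finally, I would verify that equality holds for the dominating pair $(P_0^\star, P_1^\star)$ of equation~\eqref{eq:def-P0-P1-star} by direct computation: the likelihood ratios at the three outputs are $e^\eps$, $e^{-\eps}$, and $1$, so only the first two points contribute to $D_f$, and their $P_1^\star$-masses are precisely $\tfrac{\eta}{e^\eps-1}$ and $\tfrac{\eta e^\eps}{e^\eps-1}$, matching the upper bound.

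The main obstacle is justifying the secant bound on $[e^{-\eps},e^\eps]$ in a way that applies uniformly (so that the nonnegativity $f \geq 0$ is genuinely used, and the bound is tight where $R$ concentrates on the two extremes $e^{\pm\eps}$). The subtlety is to split the interval at $1$ and use that $f(1)=0$ to keep the two affine majorants pinned at $r=1$; once this is done, the rest is a direct expectation computation combined with the identity $\mathbb{E}[(R-1)^+] = \mathbb{E}[(1-R)^+]$, which makes the TV constraint enter symmetrically. No additional structure of $f$ beyond convexity and $f(1)=0$ is needed, so the same argument yields the stated formula uniformly in the choice of $f$-divergence.
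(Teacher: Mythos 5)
Your proof is correct, but it takes a genuinely different route from the paper's. The paper argues abstractly: since $(P_0,P_1) \in \Q_{\eps,\eta}$ forces the ROC curve $f(P_0,P_1)$ to dominate $f(P_0^\star,P_1^\star)$ pointwise (Corollary~\ref{corr:DP-TV-HP}), Blackwell's comparison-of-experiments theorem yields a stochastic map $T$ with $P_i = T \circ P_i^\star$, and the data-processing inequality for $f$-divergences then gives $D_f(P_0\|P_1) \leq D_f(P_0^\star\|P_1^\star)$ in one line. Your argument bypasses Blackwell entirely: you encode the $\eps$-LDP and $\eta$-TV constraints as moment constraints on the likelihood ratio $R = dP_0/dP_1$ under $P_1$ (range $[e^{-\eps},e^\eps]$, mean $1$, $\E[(R-1)^+]=\E[(1-R)^+]=d_{TV}\leq\eta$), majorize $f$ by its two secants through $(1,0)$ on either side, and integrate; the identity $\frac{1}{1-e^{-\eps}}=\frac{e^\eps}{e^\eps-1}$ then collapses the result to the stated formula. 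The tightness check at $(P_0^\star,P_1^\star)$ is the same explicit three-atom computation in both cases. Your approach is more elementary and self-contained (nothing beyond convexity and a moment identity), and it makes visible exactly why the extremal pair places all the likelihood-ratio mass at $\{e^{-\eps},1,e^\eps\}$; the paper's approach is slicker and plugs directly into the ROC/Blackwell machinery already developed for the composition theorem, but at the cost of invoking that machinery. One minor remark: you invoke $f\geq 0$ to justify substituting $\E[(R-1)^+]\leq\eta$ and $\E[(1-R)^+]\leq\eta$ term by term; since these two expectations are actually \emph{equal} to $d_{TV}(P_0,P_1)$, you could instead factor out $d_{TV}$ first, note that $f(e^\eps)+e^\eps f(e^{-\eps})\geq 0$ by convexity (compare with the supporting line at $1$), and then apply $d_{TV}\leq\eta$ — this drops the reliance on nonnegativity of $f$, matching the generality the paper's DPI argument already has. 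As stated, though, the hypothesis $f\geq 0$ is in the theorem, so your proof is complete.
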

Denoting by $\Q_{\eps}$ the set of all $\eps$-LDP mechanisms, one obtains as an immediate corollary that
\begin{align*}
    \sup_{ \substack{(P_0,P_1) \in \Q_{\eps}} } D_f(P_0 || P_1) = \frac{1}{e^\eps+1} \left(  f \left(e^\eps \right) + e^\eps f \left( e^{-\eps} \right) \right).
\end{align*}

\emph{Proof:}  
    Consider two distributions $P_0$ and $P_1$ over a common alphabet $\Y$ such that $(P_0,P_1) \in \Q_{\eps,\eta}$. Let $f(P_0,P_1)$ be the corresponding ROC curve (cf. Definition~\ref{def:ROC}), Then, by Corollary~\ref{corr:DP-TV-HP}, $f(P_0, P_1) \geq f(P_0^\star,P_1^\star)$ since $(P_0^\star, P_1^\star)$ achieves equality in Corollary~\ref{corr:DP-TV-HP}. Hence, it follows from Blackwell's theorem~\cite{BlackwellEquivalentExperiments} that there exists a channel $Q: \{0,1,2\} \rightarrow \Y$ such that $P_i = Q \circ P_i^\star$, $i \in \{0,1\}$. Thus it follows from the data processing inequality for $f$-divergences that
    \begin{align*}
        D_f(P_0 || P_1) = D_f( Q \circ P_0^\star || Q \circ P_1^\star) \leq D_f(P_0^\star || P_1^\star).
    \end{align*}

\subsection{Contraction Coefficient for KL Divergence}

Given two distributions $P_0$ and $P_1$ over the same alphabet $\Y$, the KL divergence is given $D(P_0 || P_1) = \E \left[ \log \frac{dP_0}{dP_1} \right]$. We provide an exact characterization of the maximum contraction coefficient $\eta_{KL}$ for channels $Q \in \Q_{\eps,\eta}$.

\begin{Theorem} \label{thm:contraction-KL}
Given $\eps \geq 0$ and $0 \leq \eta \leq \frac{e^\eps-1}{e^\eps+1}$,
\begin{align}
    \label{eq:thm-contraction-KL}
    \sup_{Q \in \Q_{\eps,\eta}} \eta_{KL} (Q) = \eta_{KL} \left( Q^\star_{\eps,\eta} \right) = \eta \frac{e^\eps-1}{e^\eps+1}.
\end{align}
\end{Theorem}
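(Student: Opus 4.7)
The plan is to prove the theorem in two halves: establishing achievability (that the dominating channel $Q^\star_{\eps,\eta}$ attains the stated contraction coefficient), and a matching upper bound for all $Q \in \Q_{\eps,\eta}$. The strategy parallels the treatment of the pure $\eps$-LDP case in \cite{asoodeh:22}, with the $\eta$-TV constraint providing the refinement from $\eta_{\max}^2$ to $\eta\,\eta_{\max}$, where for brevity $\eta_{\max} = (e^\eps-1)/(e^\eps+1)$.

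For achievability, I would compute $\eta_{KL}(Q^\star_{\eps,\eta})$ directly. Since $Q^\star_{\eps,\eta}$ has binary input, the supremum defining $\eta_{KL}$ is a two-parameter optimization over $(P_0, P_1)$ on $\{0,1\}$. By the symmetry of $Q^\star_{\eps,\eta}$ in its two input rows, the worst case is attained in the local regime around the uniform distribution $\pi = (1/2, 1/2)$: set $P_0 = \pi + \delta(1,-1)$ and $P_1 = \pi - \delta(1,-1)$, expand both $D(P_0 \| P_1)$ and $D(M_0 \| M_1)$ to leading order in $\delta$, and compute the limiting ratio as $\delta \to 0$. This limit equals the local $\chi^2$-contraction coefficient $\eta_{\chi^2}(Q^\star_{\eps,\eta}, \pi)$. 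Evaluating the latter by computing $\mathrm{Var}_M(\E[g(X)|Y])$ for the optimal mean-zero unit-variance $g(x) = \pm 1$, using the backward channel rows $(e^\eps/(e^\eps+1), 1/(e^\eps+1))$, $(1/(e^\eps+1), e^\eps/(e^\eps+1))$, $(1/2, 1/2)$ together with output marginals $(\beta, \beta, 1 - 2\beta)$ where $\beta = \eta(e^\eps+1)/(2(e^\eps-1))$, yields $2\beta \,\eta_{\max}^2 = \eta\, \eta_{\max}$, establishing the lower bound on $\eta_{KL}(Q^\star_{\eps,\eta})$.

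For the upper bound, I would invoke the classical inequality $\eta_{KL}(Q) \leq \eta_{\chi^2}(Q)$ and then bound $\eta_{\chi^2}(Q) \leq \eta\,\eta_{\max}$ for every $Q \in \Q_{\eps,\eta}$. The natural tool is the variational form $\eta_{\chi^2}(Q,\pi) = \sup_{g} \E_M\!\left[ (\E_\pi[g(X)\,r(Y,X)])^2 \right]$ (sup over mean-zero unit-variance $g$), where $r(y,x) = Q(y|x)/M(y)$ and $M = \pi Q$. The $\eps$-LDP constraint gives $r(y,x) \in [e^{-\eps}, e^\eps]$, which alone yields only Raginsky's bound $\eta_{\chi^2}(Q) \leq \eta_{\max}^2$. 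The hard part will be leveraging the $\eta$-TV constraint to improve this by the additional factor $\eta/\eta_{\max}$. A promising route is a Doeblin-style decomposition $Q(\cdot|x) = (1-\eta)\mu + \eta\,\tilde Q(\cdot|x)$ for a common minorizing measure $\mu$ (exact for binary input, extendable via Blackwell's theorem using the pairwise structure from Theorem \ref{lem:max-fdiv}). The residual channel $\tilde Q$ still satisfies the $\eps$-LDP constraint, and the explicit factor $\eta$ in the row differences $Q_x - Q_{x'} = \eta(\tilde Q_x - \tilde Q_{x'})$ propagates through the Cauchy--Schwarz step of the variance computation, while the LDP analysis of $\tilde Q$ supplies a further factor $\eta_{\max}$. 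Combining these with achievability closes the loop and gives the equality in \eqref{eq:thm-contraction-KL}.
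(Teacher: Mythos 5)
Your achievability calculation is correct and sound: localizing around the uniform prior $\pi=(1/2,1/2)$ gives the local $\chi^2$-contraction $\eta_{\chi^2}(Q^\star_{\eps,\eta},\pi)$, which is a valid lower bound on $\eta_{KL}(Q^\star_{\eps,\eta})$, and your computation of $2\beta\,\eta_{\max}^2=\eta\,\eta_{\max}$ with the posterior rows you wrote down is right. This part parallels the spirit of the paper's argument, which also ultimately evaluates a divergence-type quantity at $(P_0^\star,P_1^\star)$.

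The upper bound, however, has a genuine gap. The inequality you invoke, $\eta_{KL}(Q)\leq\eta_{\chi^2}(Q)$, points the wrong way: the established ordering (Ahlswede--G\'acs, Raginsky, Polyanskiy--Wu) is $\eta_{\chi^2}(Q)\leq\eta_{KL}(Q)\leq\eta_{TV}(Q)$, and it can be strict. So you cannot close the upper bound by controlling $\eta_{\chi^2}$. Separately, the Doeblin-style decomposition $Q(\cdot|x)=(1-\eta)\mu+\eta\,\tilde Q(\cdot|x)$ does not preserve $\eps$-LDP for the residual $\tilde Q$: extracting the common component $\mu$ can make the ratios $\tilde Q(y|x)/\tilde Q(y|x')$ blow up, so the asserted $\eps$-LDP property of $\tilde Q$ is false, and the factor bookkeeping you sketch would not produce $\eta\,\eta_{\max}$ even if it did go through.

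The paper sidesteps all of this. It first reduces to binary-input channels (Ordentlich, making the sup over $Q$ a sup over pairs $(P_0,P_1)$), then uses the Le Cam representation $\eta_{KL}((P_0,P_1))=\sup_{\beta\in(0,1)}LC_\beta(P_0\|P_1)$ (Polyanskiy--Wu). Since $LC_\beta$ is an $f$-divergence, Theorem~\ref{lem:max-fdiv} (which follows from Corollary~\ref{corr:DP-TV-HP} and Blackwell) shows the inner supremum over $(P_0,P_1)\in\Q_{\eps,\eta}$ is attained at $(P_0^\star,P_1^\star)$; a one-line optimization over $\beta$ then yields $\eta\,\eta_{\max}$. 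If you want to salvage your route, the place to insert Blackwell/Theorem~\ref{lem:max-fdiv} is exactly where you would otherwise need the (false) reverse $\chi^2$ inequality: reduce to binary-input channels, then dominate every feasible binary pair by $(P_0^\star,P_1^\star)$ via data processing for the Le Cam $f$-divergence, rather than trying to control $\eta_{\chi^2}$ of arbitrary $Q$.
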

Maximizing $\eta_{KL}(Q)$ for $\eps$-LDP channels corresponds to setting $\eta$ to its maximum value, i.e., $\eta = \frac{e^\eps-1}{e^\eps+1}$ and thus
\begin{align}
    \sup_{Q \in \Q_\eps} \eta_{KL}(Q) = \left( \frac{e^\eps-1}{e^\eps+1} \right)^2,
\end{align}
which recovers Theorem 1 of~\cite{asoodeh:22}. Moreover, consider the $M$-ary randomized response mechanism defined by $Q_R: \{1,2,\ldots,M\} \rightarrow \{1,2,\ldots,M\}$ satisfying
\begin{align}
    Q_R(j|i) = \begin{cases}
        \frac{e^\eps}{e^\eps+M-1}, & i=j \\
        \frac{1}{e^\eps+M-1}, & i \neq j.
    \end{cases}
\end{align}
Then, by Theorem~\ref{thm:contraction-KL} and~\cite[Theorem 1]{choi:94}, we get
\begin{align}
    \eta_{\chi^2} (Q_R) = \eta_{KL} (Q_R) \leq \frac{(e^\eps-1)^2}{(e^\eps+M-1)(e^\eps+1)}. 
\end{align}
By Proposition 1 of~\cite{asoodeh:22},  the upper bound is tight, thus highlighting the refined analysis gained by incorporating TV in our analysis. Indeed, the proof of Theorem~\ref{thm:contraction-KL} generalizes (and in fact simplifies) the proof of Asoodeh and Zhang~\cite[Theorem 1]{asoodeh:22} by utilizing Lemma~\ref{lem:max-fdiv}.

Moreover, one can make use of Theorem~\ref{thm:contraction-KL} for a more refined comparison of two privatization mechanisms. For instance, suppose we construct two feasible (according to certain privacy constraints) privatization mechanisms $Q_1$ and $Q_2$. If $Q_1$ is stochastically degraded from $Q_2$ (i.e., $Q_1$ can be simulated from $Q_2$), then for any utility function that satisfies the data processing inequality (e.g., an $f$-divergence), $Q_2$ is a better choice (and vice versa if $Q_2$ is degraded from $Q_1$). If there is no degradedness relationship between the two, then how do we choose? In this case, one may use $\eta_Q \frac{e^{\eps_Q}-1}{e^{\eps_Q}+1}$ as a simple (and generally easily to compute) proxy for the utility function.

\begin{proof}[Proof of Theorem~\ref{thm:contraction-KL}]
    By Theorem 1 of~\cite{ordentlich:21}, it suffices to consider binary-input channels $Q = (P_0, P_1)$. Now by (the proof of) Theorem 21 in~\cite{polyanskiy:17}, we have
    \begin{align}
        \eta_{KL}((P_0,P_1))  = \sup_{\beta \in (0,1)} LC_{\beta} (P_0 || P_1),
    \end{align}
    where $LC_{\beta}(P_0 || P_1)$ is an $f$-divergence, known as the Le Cam divergence, and is given by~\cite{lecam:12,gyorfi:2001,raginsky:16}
    \begin{align}
        LC_{\beta} (P_0 || P_1) = \beta (1-\beta) \E_Q \left[ \frac{(\frac{dP}{dQ}-1)^2}{\beta \frac{dP}{dQ} + 1-\beta } \right].
    \end{align}
    Therefore,
    \begin{align}
        \sup_{Q \in \Q_{\eps,\delta}}  \!\! \eta_{KL}(Q) & = \!\!  \sup_{(P_0,P_1)  \in \Q_{\eps,\delta}} \eta_{KL}((P_0,P_1)) \notag \\
        & = \!\!\! \sup_{(P_0,P_1)  \in \Q_{\eps,\delta}} \sup_{\beta \in (0,1)} \! LC_{\beta} (P_0 || P_1)  \notag  \\
        & = \!\! \sup_{\beta \in (0,1)} \sup_{(P_0,P_1)  \in \Q_{\eps,\delta}}  \!\! LC_{\beta} (P_0 || P_1) \notag  \\
        & \stackrel{\text{(a)}} = \sup_{\beta \in (0,1)} LC_{\beta}(P_0^\star || P_1^\star)  \notag \\
        & = \eta (e^\eps-1) \sup_{\beta \in (0,1)} \beta(1-\beta) \cdot  \notag \\
        & \quad \left( \! \frac{1}{\beta e^\eps \! + \! 1 \! - \! \beta} \! + \! \frac{1}{(1 \! - \! \beta) e^\eps+\! \beta} \right)  \notag \\
        & \stackrel{\text{(b)}} = \eta \frac{e^\eps-1}{e^\eps+1},  
    \end{align}
    where (a) follows from Lemma~\ref{lem:max-fdiv}, and (b) follows from elementary algebraic manipulations (the supremum is achieved for $\beta=1/2$).
\end{proof}

\subsection{Contraction in terms of TV}

A well-known feature of $\eps$-locally DP mechanisms is that, even if the input KL divergence $D(P_0 || P_1)$ is infinite, the induced output KL divergence $D(M_0 || M_1)$ is finite. This is captured by bounding $D(M_0 || M_1)$ in terms of $d_{TV}(P_0 || P_1)$ (which is finite by definition). In particular, we have $D(M_0 || M_1) \leq \log(1+\chi^2(M_0 ||M_1))$~\cite{sason:16} and 
\begin{Theorem} \label{thm:chi2-bound}
    Given $\eps \geq 0$, $0 \leq \eta \leq \frac{e^\eps+1}{e^\eps-1}$, and $Q: \X \to \Y$ satisfying $Q \in \Q_{\eps,\eta}$, then for any distributions $P_0$ and $P_1$ on $\X$, the induced marginals $M_0$ and $M_1$ satisfy
    \begin{align} \label{eq:thm-chi2-bound}
        \chi^2 (M_0 || M_1) \leq 4 \eta (e^\eps-1)(e^{-\eps}+1)d^2_{TV}(P_0,P_1).
    \end{align}
\end{Theorem}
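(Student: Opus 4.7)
The plan is to bound $\chi^2(M_0 || M_1) = \sum_y (M_0(y) - M_1(y))^2/M_1(y)$ by writing each summand as the product $|M_0(y)-M_1(y)|/M_1(y) \cdot |M_0(y)-M_1(y)|$ and extracting a factor of $d_{TV}(P_0,P_1)$ from each piece. The first factor will come from a pointwise estimate driven by $\eps$-LDP, while the second will emerge after summing over $y$ and invoking the total-variation contraction $d_{TV}(M_0, M_1) \leq \eta\, d_{TV}(P_0, P_1)$ from Proposition~\ref{prop:TV-contraction}. This division is what promotes the naively-linear dependence on $d_{TV}(P_0,P_1)$ to the quadratic dependence appearing in the statement.

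First I would establish the pointwise inequality
\[
|M_0(y) - M_1(y)| \leq (e^\eps - e^{-\eps})\, M_1(y)\, d_{TV}(P_0, P_1).
\]
Since $\sum_x (P_0(x) - P_1(x)) = 0$, one may rewrite $M_0(y) - M_1(y) = \sum_x (Q(y|x) - c(y))(P_0(x) - P_1(x))$ for any constant $c(y)$. The $\eps$-LDP constraint, together with the fact that $M_1(y) = \sum_x Q(y|x) P_1(x)$ is a convex combination of $\{Q(y|x)\}_x$, yields $e^{-\eps} M_1(y) \leq Q(y|x) \leq e^\eps M_1(y)$ uniformly in $x$. Choosing $c(y) = \cosh(\eps)\, M_1(y)$, the midpoint of $[e^{-\eps} M_1(y), e^\eps M_1(y)]$, gives $|Q(y|x) - c(y)| \leq \sinh(\eps)\, M_1(y)$, and the pointwise bound follows from $\sum_x |P_0(x) - P_1(x)| = 2\, d_{TV}(P_0, P_1)$.

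Substituting this estimate into $\chi^2(M_0 || M_1)$ and using $\sum_y |M_0(y) - M_1(y)| = 2\, d_{TV}(M_0, M_1)$ produces
\[
\chi^2(M_0 || M_1) \leq 2(e^\eps - e^{-\eps})\, d_{TV}(P_0,P_1)\, d_{TV}(M_0, M_1).
\]
Finally, applying Proposition~\ref{prop:TV-contraction} to bound $d_{TV}(M_0, M_1) \leq \eta\, d_{TV}(P_0, P_1)$ yields $\chi^2(M_0 || M_1) \leq 2\eta (e^\eps - 1)(1 + e^{-\eps})\, d^2_{TV}(P_0, P_1)$, which in turn implies the claimed bound.

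The main obstacle is really just careful constant tracking. The qualitative structure---a pointwise LDP-driven estimate furnishing one factor of $d_{TV}(P_0, P_1)$, combined with the TV contraction of Proposition~\ref{prop:TV-contraction} furnishing the other---is what simultaneously produces the quadratic dependence on $d_{TV}(P_0, P_1)$ and the prefactor $\eta(e^\eps - 1)(1+e^{-\eps})$ in the final inequality. The only place one must be careful is the choice of $c(y)$ in the midpoint argument and the factors of two arising from both the normalization of $d_{TV}$ and the contraction step.
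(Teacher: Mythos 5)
Your proof is correct, and it takes a genuinely different route from the paper's. The paper invokes Proposition~8 of Duchi and Ruan (cited as~\cite{duchi:18}) as a black box to get $\chi^2(M_0\|M_1) \le 4\,d_{TV}^2(P_0,P_1)\max_{x,x'}\chi^2(Q(\cdot|x)\|Q(\cdot|x'))$, and then bounds the inner maximum by $\eta(e^\eps-1)(e^{-\eps}+1)$ via Theorem~\ref{lem:max-fdiv} (itself a consequence of Blackwell's theorem). You instead prove the pointwise estimate $|M_0(y)-M_1(y)|\le (e^\eps-e^{-\eps})\,M_1(y)\,d_{TV}(P_0,P_1)$ directly from $\eps$-LDP via the centering trick with $c(y)=\cosh(\eps)M_1(y)$, split $\chi^2$ multiplicatively so that one factor is absorbed by this pointwise bound, and then use the Dobrushin contraction (your Proposition~\ref{prop:TV-contraction}) to extract the $\eta$ and the second factor of $d_{TV}(P_0,P_1)$. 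Your argument is more self-contained (it does not route through Blackwell or through the cited Proposition~8) and in fact yields $\chi^2(M_0\|M_1)\le 2\eta(e^\eps-1)(1+e^{-\eps})\,d^2_{TV}(P_0,P_1)$, a factor-of-two improvement over the stated constant, which of course implies the theorem. The tradeoff is that the paper's proof is shorter given its machinery, and that same machinery (Theorem~\ref{lem:max-fdiv}) applies uniformly to any $f$-divergence, whereas your multiplicative-split trick is tailored to $\chi^2$; but for this particular statement your approach is both more elementary and sharper.
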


Considering the maximum value of $\eta = \frac{e^\eps+1}{e^\eps-1}$, this recovers Theorem 2 of~\cite{asoodeh:22}. Indeed, once again the proof refines and simplifies the proof of~\cite{asoodeh:22}. 

\begin{proof}
    As done by Asoodeh and Zhang~\cite{asoodeh:22}, we first utilize~\cite[Proposition 8]{duchi:18}:
    \begin{align*}
        \chi^2(M_0 || M_1) \leq 4 d^2_{TV}(P_0,P_1) \! \max_{x,x'} \chi^2( Q(.|x) || Q(.|x') ) .
    \end{align*}
    Now note that it follows straightforwardly from Definitions~\ref{def:LDP} and~\ref{def:LocalTV} that $Q \in \Q_{\eps,\eta} \Rightarrow ( Q(.|x) || Q(.|x') ) \in \Q_{\eps,\eta}$ for all $(x,x') \in \X^2$. Hence, it follows from Lemma~\ref{lem:max-fdiv} that
    \begin{align*}
        \chi^2(M_0 || M_1) & \leq 4 d^2_{TV}(P_0, P_1)  \max_{(x,x') \in \X^2} \chi^2 (P_0^\star || P_1^\star) \\
        & = 4 \eta (e^\eps-1)(e^{-\eps}+1)d^2_{TV}(P_0, P_1),
    \end{align*}
    where the equality is obtained by plugging $f(t) = (t-1)^2$ in Lemma~\ref{lem:max-fdiv} (and some basic algebraic manipulations).
\end{proof}

\subsubsection{Performance of the Binary Mechanism with Erasure}
Fix $P_0$ and $P_1$. Let $Q^\star \in \Q_{\eps,\delta}$ be the mechanism maximizing $D(M_0 || M_1)$ and denote the induced marginals by $M_0^\star$ and $M_1^\star$. Now consider the binary mechanism with erasure $Q^{(be)}$ (cf. Proposition~\ref{prop:TV-contraction}) and let $M_0^{(be)}$ and $M_1^{(be)}$ be the induced marginals.
Then, by Theorem~\ref{thm:chi2-bound},
\begin{align*}
    & 4 \eta (e^\eps-1)(e^{-\eps}+1)d^2_{TV}(P_0, P_1)  \geq \chi^2 \left(M_0^\star ||M_1^\star\right) \\
    & \geq  D \left(M_0^\star ||M_1^\star\right) 
     \geq D \left( M_0^{(be)} || M_1^{(be)} \right) \\
     & 
     \stackrel{\text{(a)}} \geq 2 d_{TV}^2 \left(M_0^{(be)} || M_1^{(be)} \right) 
      \stackrel{\text{(b)}} = 2 \eta^2 d_{TV}^2 \left(P_0 || P_1 \right), 
\end{align*}
where (a) follows from Pinsker's inequality and (b) follows from Proposition~\ref{prop:TV-contraction}. As such,
\begin{align} \label{eq:ratio-bin-opt}
    \frac{D \left( M_0^{(be)} || M_1^{(be)} \right)}{D \left(M_0^\star ||M_1^\star\right)} \geq \frac{\eta}{2 (e^\eps-1)(e^{-\eps}+1)}.
\end{align}
Consider the high-privacy regime where $\eps \ll 1$, and suppose $\eta \geq c \eps$ for some $c \in (0,1/2)$. Then the right-hand side of~\eqref{eq:ratio-bin-opt} goes to $\frac{c}{4}$ as $\eps$ goes to 0. That is, for the high-privacy regime, the binary mechanism with erasure is order-optimal. It is worth noting that the binary mechanism in the local DP setting (where no TV constraint is imposed) is optimal for all $\eps$'s below some threshold $\eps^\star$~\cite[Theorem 5]{ExtremalMechanisms}.

\subsection{Conversion from Local Differential Privacy}
Consider the results in Proposition~\ref{prop:TV-contraction}, and Theorems~\ref{thm:contraction-KL} and~\ref{thm:chi2-bound}. By comparison with the pure local DP counterparts, we notice that in all three cases, the relationship is a multiplicative factor equal to $\eta \frac{e^\eps+1}{e^\eps-1}$. We show that an inequality of this form hold more generally.

In particular, fix two distributions $P_0$ and $P_1$ on $\X$  and a convex function $f$ satisfying $f(1)=0$. Let,
\begin{align*}
    \opt_{\eps,\eta} = \max_{Q \in \Q_{\eps,\eta}} ~D_f (M_0 || M_1), 
   \end{align*} 
 and
 \begin{align*}
    \opt_{\eps} =  \max_{Q \in Q_\eps}  ~D_f (M_0 || M_1). 
\end{align*}
Clearly, $\opt_\eps \geq \opt_{\eps,\eta}$. Our next result proves a reverse inequality:
\begin{Theorem}
    \label{thm:opt-opt}
    For any $f$-divergence (where $f$ is convex and satisfies $f(1)=0$), $\eps \geq 0$, and $\eta \in [0, \frac{e^\eps-1}{e^\eps+1}]$,
    \begin{align}
        \label{eq:thm-opt-opt}
        \opt_{\eps,\eta} \geq \eta \frac{e^\eps+1}{e^\eps-1} \opt_{\eps}.
    \end{align}
\end{Theorem}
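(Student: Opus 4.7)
The plan is to prove the inequality constructively: starting from any $Q^\star \in \Q_\eps$ whose induced marginals nearly achieve $\opt_\eps$, I will build a channel in $\Q_{\eps,\eta}$ whose induced $f$-divergence is exactly $\alpha$ times that of $Q^\star$, where $\alpha \defeq \eta \tfrac{e^\eps+1}{e^\eps-1} \in [0,1]$ (the range follows from the hypothesis on $\eta$).

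The key construction is to append an independent erasure step to $Q^\star$. Introduce a fresh symbol $\bot \notin \Y$ and define $\tilde Q: \X \to \Y \cup \{\bot\}$ by
\begin{align*}
    \tilde Q(y \mid x) = \alpha\, Q^\star(y \mid x) \text{ for } y \in \Y, \qquad \tilde Q(\bot \mid x) = 1 - \alpha.
\end{align*}
The crucial property is that $\tilde Q(\bot \mid x)$ does not depend on $x$: with probability $1-\alpha$ the channel emits the same symbol under every hypothesis, revealing nothing and contributing nothing to any later divergence.

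I would then carry out three checks in order. (i) $\tilde Q$ is $\eps$-LDP: at $y \in \Y$ the likelihood ratio reduces to $Q^\star(y\mid x)/Q^\star(y\mid x')$ (the $\alpha$ factors cancel), and at $\bot$ it is $1$; both are bounded by $e^\eps$. (ii) $\tilde Q$ is $\eta$-TV: since only the masses on $\Y$ vary across inputs, $d_{TV}(\tilde Q(\cdot\mid x),\tilde Q(\cdot\mid x')) = \alpha\, d_{TV}(Q^\star(\cdot\mid x),Q^\star(\cdot\mid x'))$, and combining with the standard bound $d_{TV}(Q^\star) \leq \tfrac{e^\eps-1}{e^\eps+1}$ for $\eps$-LDP channels (cf.~\eqref{eq:TV-bound}) yields $d_{TV}(\tilde Q) \leq \alpha\cdot\tfrac{e^\eps-1}{e^\eps+1} = \eta$. (iii) The output $f$-divergence is exactly $\alpha$ times the original: on $\{\bot\}$ the masses of $\tilde M_0$ and $\tilde M_1$ are both $1-\alpha$, contributing $(1-\alpha)f(1)=0$, while on $\Y$ the ratios $\tilde M_0(y)/\tilde M_1(y) = M_0^\star(y)/M_1^\star(y)$ are preserved and the common prefactor $\alpha$ pulls out, giving $\alpha D_f(M_0^\star\|M_1^\star)$.

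Combining these, $\opt_{\eps,\eta} \geq D_f(\tilde M_0 \| \tilde M_1) = \alpha\, D_f(M_0^\star \| M_1^\star)$, and letting $Q^\star$ range over $\Q_\eps$ yields $\opt_{\eps,\eta} \geq \alpha\, \opt_\eps$. I do not foresee a substantive obstacle here; the real content is that the \emph{same} design choice (shared erasure probability $1-\alpha$) simultaneously preserves $\eps$-LDP, scales $d_{TV}$ by $\alpha$, and contributes zero to $D_f$, so the three constraints are matched simultaneously for free. The only technicality is the case where the supremum defining $\opt_\eps$ is not attained, which is handled by taking $Q^\star$ to be an $\epsilon'$-optimizer and letting $\epsilon' \to 0$.
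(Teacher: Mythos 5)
Your proposal is correct and is essentially the paper's own proof: you construct the same channel (compose $Q^\star$ with a shared erasure step of probability $1-\alpha$, i.e., the paper's $W_\alpha \circ Q$ with its erasure parameter playing the role of your $1-\alpha$), verify $\eps$-LDP, the TV bound via~\eqref{eq:TV-bound}, and the exact $\alpha$-scaling of $D_f$ using $f(1)=0$, then take a supremum. The only differences are a notational inversion of $\alpha$ and your explicit remark about non-attainment of the supremum, which the paper leaves implicit.
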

The proof relies on the following observation: given a channel $Q \in \Q_{\eps}$, then composing $Q$ with an erasure channel $W$ with parameter $\alpha$ yields a channel $W \circ Q \in \Q_{\eps, (1-\alpha) \frac{e^\eps-1}{e^\eps+1}}$. The details are deferred to Appendix~\ref{app:thm-opt-opt}.

\bibliographystyle{IEEEtran}
\bibliography{IEEEabrv,ref}

\newpage

\appendix

\subsection{Proof for Approximate Differential Privacy} \label{app:general-delta}

Generalizing the result to approximate differential privacy requires the following (dominating) mechanism, which outputs an integer sampled from $P_0$ when the true database is $D_0$ (null hypothesis) and from $P_1$ when the true database is $D_1$ (alternative hypothesis). For $\varepsilon \geq 0$, $\delta>0$, and $\alpha \in [0,1]$, define:
\begin{align}
    P_0 = \begin{cases}
    \delta & \text{for } x=0,\\
    \frac{(1-\delta)(1-\alpha)e^{\varepsilon}}{(1+e^{\varepsilon})} & \text{for } x=1,\\
    \alpha(1-\delta) & \text{for } x=2,\\
    \frac{(1-\delta)(1-\alpha)}{(1+e^{\varepsilon})} & \text{for } x=3,\\
    0 & \text{for } x=4,
    \end{cases}
    \label{eq:def-P0-approximate}
\end{align}
and
\begin{align}
    P_1 = \begin{cases}
    0 & \text{for } x=0,\\
    \frac{(1-\delta)(1-\alpha)}{(1+e^{\varepsilon})} & \text{for } x=1,\\
    \alpha(1-\delta) & \text{for } x=2,\\
    \frac{(1-\delta)(1-\alpha)e^{\varepsilon}}{(1+e^{\varepsilon})} & \text{for } x=3,\\
    \delta & \text{for } x=4.
    \end{cases}
    \label{eq:def-P1-approximate}
\end{align}

The total variation of the above mechanism is $d_{TV} = \delta + \frac{(1-\delta)(1-\alpha)(e^{\varepsilon}-1)}{1 + e^{\varepsilon}}$.
Setting $\alpha = 0$ yields the same mechanism proposed in \cite{TheCompositionTheoremForDP}.

The proof is analogous to case of $\delta = 0$ in Section \ref{composed_dominating_mechanism_epsilon_0}. To see how the expression $\max_{\substack{S \in \mathcal{X}^k}} \{\tilde{P_0}(S) - e^{\varepsilon_j} \tilde{P_1}(S)\} = 1 - (1-\delta)^{k}(1-\delta_j)$ in Theorem~\ref{thm:CompositionTVApproximate} was derived, we split it into two parts ($p_1 + p_2$), such that $p_1 = 1 - (1-\delta)^k$ and $p_2 = (1-\delta)^k\delta_{j}$.\\
We are interested in the sequences $s_k$ that achieve the inequality $\tilde{P_0}(s_k) - e^{\tilde{\varepsilon}} \tilde{P_1}(s_k)>0$:
\begin{itemize}
    \item $p_1$: This inequality is satisfied for all output sequences in which $x = 0$ appears at least once, because the probability of such a sequence under $P_1$ is always 0. The probability of obtaining a sequence with no ``$0$''s under $P_0$ is $(1-\delta)^k$. Taking the complement to obtain the probability that a sequence contains at least one ``$0$'' under $P_0$, we get $p_1 = 1 - (1-\delta)^k$.
    \item $p_2$: We now consider the sequences that do not contain a ``$0$'' and that lead to a positive $\tilde{P_0}(s_k) - e^{\tilde{\varepsilon}} \tilde{P_1}(s_k)$. We apply the same logic that we followed in the case of $\delta = 0$. Factoring out $(1-\delta)^k$ from the result, we obtain $p_2 = (1-\delta)^k \delta_j$.
\end{itemize}

\subsection{Proof of Theorem~\ref{thm:central-limit}} \label{app:central-limit}

The theorem is an application of Dong \emph{et al.}'s result, in particular Theorem 3.5 in ``Gaussian Differential Privacy''~\cite{GaussianDP}, specialized for $\eps$-DP and $\eta$-TV. Note that, by definition, $M_{ni}$ is $f_{\eps_{ni},0,\eta_{ni}}$-DP. Now, following the definitions in~\cite{GaussianDP},
\begin{align*}
    kl(f_{\eps,0,\eta}) & \defeq - \int_0^1 \log \left| f_{\eps,0,\eta}'(t) \right| dt  \notag \\
    & = - \left( \int_0^{ \frac{\eta}{e^\eps-1}} \eps dt + \int_{1- \eta \frac{e^\eps}{e^\eps+1}}^1 (-\eps) dt \right) \notag \\
    & = \eps \eta. 
\end{align*}
Similarly,
\begin{align*}
    \kappa_2(f_{\eps,0,\eta})  \defeq  \int_0^1 \log^2 \left| f_{\eps,0,\eta}'(t) \right| dt  = \eta \eps^2 \frac{e^\eps+1}{e^\eps-1}, 
\end{align*}
and
\begin{align*}
    \kappa_3 (f_{\eps,0,\eta})  \defeq  \int_0^1 \left| \log \left| f_{\eps,0,\eta}'(t) \right| \right|^3 dt = \eta \eps^3 \frac{e^\eps+1}{e^\eps-1}. 
\end{align*}
Now,
\begin{align} \label{eq:kl-limit}
    \sum_{i=1}^n kl(f_{\eps_{ni},0,\eta_{ni}}) = \sum_{i=1}^n \eps_{ni} \eta_{ni} \rightarrow \frac{\mu^2}{2},
\end{align}
where the limit holds by assumption of the theorem. Similarly,
\begin{align} \label{eq:max-kl-limit}
    \max_{1 \leq i \leq n} \! kl(f_{\eps_{ni},0,\eta_{ni}}) \! = \!\!  \max_{i} \eta_{ni} \eps_{ni} \! \leq \! \max_{i} \eps_{ni} \rightarrow 0,
\end{align}
where the limit holds by assumption. Since $kl(f) \geq 0$, then $\max_{1 \leq i \leq n} kl(f_{\eps_{ni},0,\eta_{ni}}) \rightarrow 0$. Moreover, note that
\begin{align*}
   & \sum_{i=1}^n \left|\frac{1}{2} \kappa_2(f_{\eps_{ni},0,\eta_{ni}}) - kl(f_{\eps_{ni},0,\eta_{ni}}) \right|  \\
   & =
   \sum_{i=1}^n \eps_{ni} \eta_{ni} \left|  \frac{\eps_{ni}(e^\eps_{ni}+1)}{2(e^\eps_{ni}-1)} -1 \right| \\
   & \leq \sum_{i=1}^n \eps_{ni} \eta_{ni} \eps_{ni}^2 \\
   & \leq \max_{1 \leq i \leq n} \eps_{ni}^2 \sum_{i=1}^n \eps_{ni} \eta_{ni}  \rightarrow 0,
\end{align*}
where the first inequality follows from the Taylor expansion of $t(e^t+1)/(e^t-1)$. Hence,
\begin{align} \label{eq:kappa2-limit}
    \sum_{i=1}^n \kappa_2(f_{\eps_{ni},0,\eta_{ni}}) \rightarrow \mu^2.
\end{align}
  Similarly, one can show that $\sum_{i=1}^n \kappa_3(f_{\eps_{ni},0,\eta_{ni}}) \rightarrow 0$. Coupled with equations~\eqref{eq:kl-limit},~\eqref{eq:max-kl-limit}, and~\eqref{eq:kappa2-limit}, the assumptions of Theorem 3.5 of~\cite{GaussianDP} are satisfied, hence the limit of the composition converges uniformly to $G_\mu$.

\subsection{Proof of Proposition~\ref{prop:subsampling}} \label{app:subsampling}
    Since $M$ is $(\eps,\delta)$-DP then $M \circ \Sample_{m}$ is ($\log \left( 1+ p \left(e^\eps -1\right) \right), p \delta$)-DP by Theorem 29 of~\cite{steinke2022composition}. Similarly, $M$ being $\eta$-TV can be rewritten as $(0,\eta)$-DP (see, for instance, equations of Corollary~\ref{corr:DP-TV-HP}). Hence, $M \circ \Sample_{m}$ is ($0$,$p\eta$)-DP, i.e., it is $p\eta$-TV, as desired. Although for $(\eps,\delta)$-pair, the result is known to be tight, it is not clear a priori whether the result is tight \emph{simultaneously} for both constraints. We show that is tight by exhibiting a mechanism that achieves the region. Consider two neighboring databases $D_0$ and $D_1$ that differ on one entry. Say there exists an element  $a \in D_0$ but $a \notin D_1$. Consider the following mechanism:
    \begin{align*}
        M(D)  \sim P_0, & \text{ if } a \in D \\
        M(D)  \sim P_1, & \text{ if } a \notin D,
    \end{align*} 
    where $P_0$ and $P_1$ are defined in equation~\eqref{eq:def-P0-P1}. Then, after subsampling with ratio $p$, the induced distributions will be as follows:
\begin{align}
        M(D)  \sim pP_0+(1-p)P_1, & \text{ if } a \in D \\
        M(D)  \sim P_1, & \text{ if } a \notin D.
    \end{align}
One can explicitly compute the corresponding privacy region, and obtain the desired result by considering the worst-case among the above region and the one induced (by symmetry) from the pair of distributions
\begin{align}
    pP_1 + (1-p)P_0 \text{ and } P_0.
\end{align}
    
\subsection{Total Variation of the Laplace Mechanism} \label{app:laplace}

To determine the total variation of the Laplace mechanism, consider two Laplace distributions, one centered at zero ($P: Lap(\frac{1}{\varepsilon},0)$), and one shifted to the right by the sensitivity ($Q: Lap(\frac{1}{\varepsilon},\Delta)$).  
\begin{align*}
        d_{TV}(P,Q) &= \int_{x:P(x)>Q(x)}^{} P(x) - Q(x) \,dx \\
        &= \int_{-\infty}^{\Delta/2} \frac{\eps}{2\Delta} e^{-\frac{\eps|x|}{\Delta}} - \frac{\eps}{2\Delta} e^{-\frac{\eps|x-\Delta|}{\Delta}} \,dx \\
        & = \int_{-\infty}^{0} \frac{\eps}{2\Delta}(e^{\eps x} - e^{-\eps(\Delta-x)}) \,dx  \\
        & ~~ + \int_{0}^{\frac{\Delta}{2}} \frac{\eps}{2\Delta}(e^{-\eps x} - e^{-\eps(\Delta-x)}) \,dx \\
    & = 1 - e^{-\frac{\varepsilon}{2}}.
\end{align*}

\subsection{Total Variation of the Gaussian Mechanism} \label{app:gaussian}
Corollary $2.13$ in \cite{GaussianDP} states that a mechanism is $\mu$-GDP if and only if it is ($\eps$,$\delta(\eps)$)-DP for all $\eps \geq 0$, where $\delta(\eps) = \Phi\bigl( - \frac{\eps}{\mu} + \frac{\mu}{2}\bigr) - e^{\eps}\Phi\bigl( - \frac{\eps}{\mu} - \frac{\mu}{2}\bigr)$. Thus, the total variation distance of the Gaussian mechanism is 
\begin{align}
    \delta(0) = \Phi\bigl(\frac{\mu}{2}\bigr) - \Phi\bigl(- \frac{\mu}{2}\bigr) = 2\Phi\bigl(\frac{\mu}{2}\bigr) - 1.
\end{align}

\subsection{Analysis of the Staircase Mechanism}
\subsubsection{Total Variation}
\label{app:staircase}

To derive the total variation of the staircase mechanism, we consider two distributions: $P(x)$ centered at $0$, and $Q(x)$ which is $P(x)$ shifted to the right by the sensitivity $\Delta$. We obtain the total variation of the staircase mechanism by subtracting the areas under each distribution for the values of $x$ where $P(x)>Q(x)$. 
We distinguish the cases of $\gamma \leq \frac{1}{2}$ and $\gamma>\frac{1}{2}$. 
For $\gamma \leq \frac{1}{2}$, we are interested in the interval $(-\infty,\gamma]$ which we split into two sub-intervals (Figure~\ref{fig:staircase_dtv_1}):
\begin{itemize}
    \item For $(-\infty,-\gamma]$, the difference between the area under the red curve and that under the blue curve is given by $a(\gamma) \Delta \sum_{n = 1}^{\infty} {(e^{-n\varepsilon} - e^{-n\varepsilon-\varepsilon})} = a(\gamma)\Delta e^{-\varepsilon}$
    \item For $(-\gamma,\gamma]$, the difference is $a(\gamma)(2\gamma \Delta (1 - e^{\varepsilon}))$
\end{itemize}

\begin{figure}[htp]
\centering
\includegraphics[width=\linewidth]{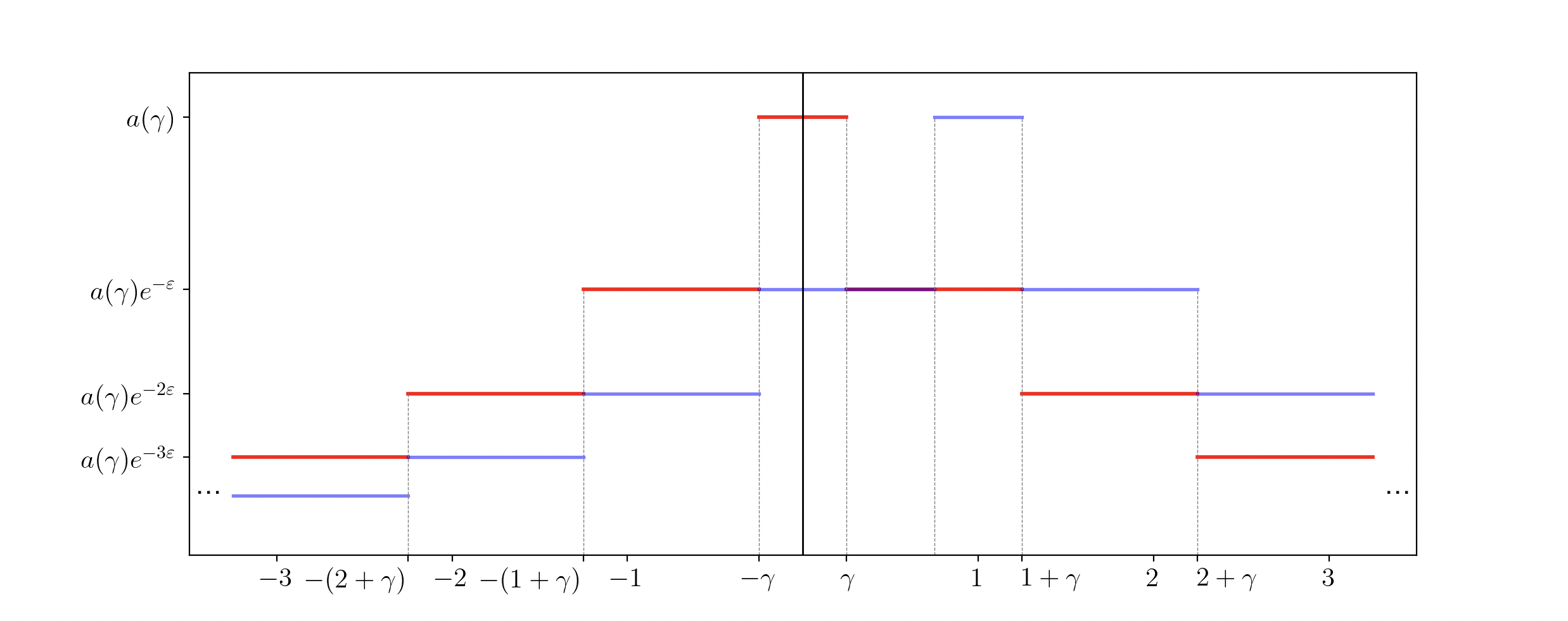}
\caption{Two staircase distributions for $\gamma\leq \frac{1}{2}$ shifted by the sensitivity.}
\label{fig:staircase_dtv_1}
\end{figure}

For $\gamma>\frac{1}{2}$, we look at the interval $(-\infty,1-\gamma)$ (Figure~\ref{fig:staircase_dtv_2}). The total variation of the mechanism is given by $a(\gamma) \Delta \sum_{n = 0}^{\infty} {(e^{-n\varepsilon} - e^{-n\varepsilon-\varepsilon})} = a(\gamma)\Delta$.

\begin{figure}[htp]
\centering
\includegraphics[width=\linewidth]{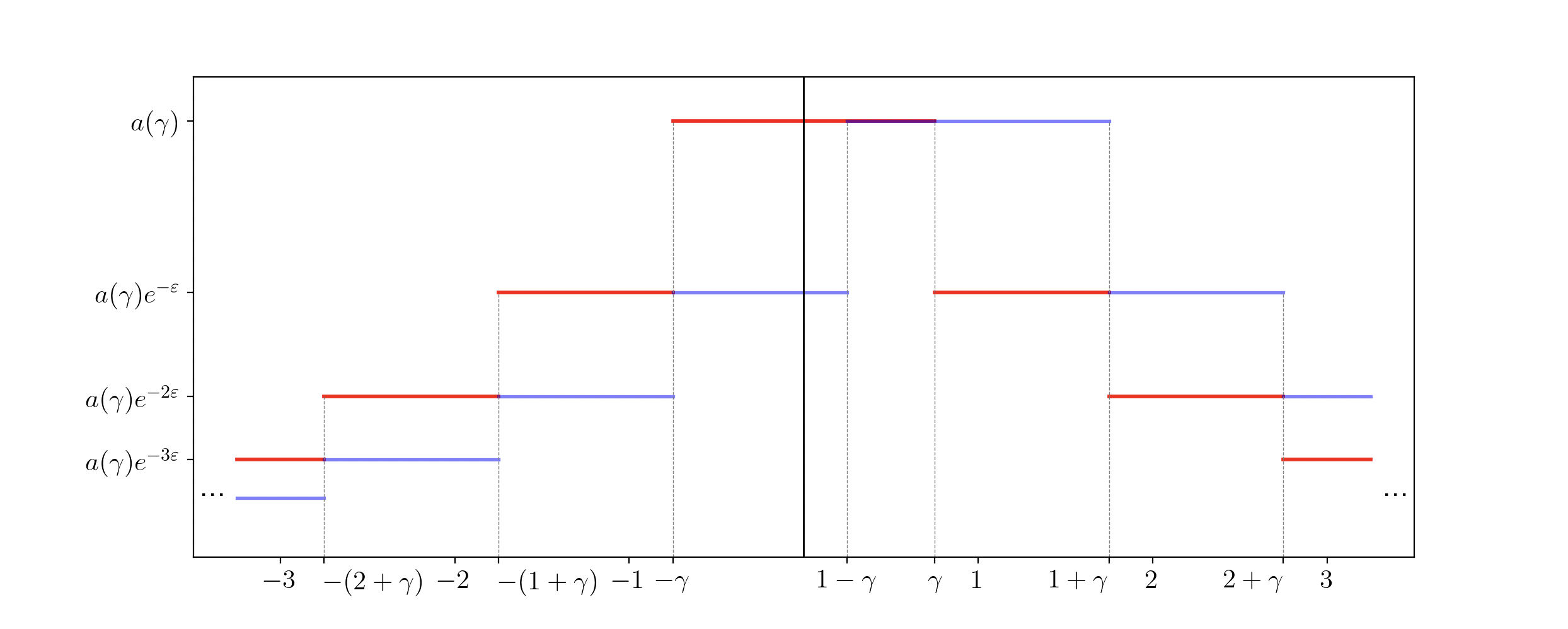}
\caption{Two staircase distributions for $\gamma>\frac{1}{2}$ shifted by the sensitivity.}
\label{fig:staircase_dtv_2}
\end{figure}

Considering two intervals for the values of $\gamma$ and setting $\Delta = 1$, we derive the following expressions for the total variation of the staircase mechanism
\begin{equation*} d_{TV} \!\! = \!\!
  \left\{
    \begin{aligned}
        & \frac{(1 - e^{-\eps}) (2 \gamma (1-e^{-\varepsilon}) + e^{-\varepsilon})}{2(\gamma + e^{-\varepsilon}(1 - \gamma))}, &\gamma \in [0,\frac{1}{2}), \\
        & \frac{1 - e^{-\varepsilon}}{2(\gamma + e^{-\varepsilon}(1 - \gamma))}, & \gamma \in [\frac{1}{2}, \infty).
    \end{aligned}
  \right.
\end{equation*}

\subsubsection{Privacy Region} \label{app:staircase-region}

The staircase mechanism with total variation $\eta$ achieves exactly the privacy region corresponding to $(\eps,0)$-DP $\eta$-TV (cf. Corollary~\ref{corr:DP-TV-HP}).
Consider two staircase-shaped distributions, $P_0$ and $P_1$ (where $P_1$ is $P_0$ shifted by $\Delta$) for some arbitrary $\gamma$. The Neyman-Pearson Lemma states that the optimal decision rule is to reject when the likelihood ratio is above some threshold $T$. Since the likelihood ratio of $P_1$ to $P_0$ is non-decreasing (cf. Figures~\ref{fig:staircase_dtv_1} and~\ref{fig:staircase_dtv_2} in Appendix~\ref{app:staircase}), we can pick a point $t$ and accept whenever $t>x$. The rejection rules $x\geq 1-\gamma$ and $x \geq \gamma$ will yield type I and II values that correspond to the points ($\frac{\eta}{e^{\eps}-1}$,$1-\frac{\eta e^{\eps}}{e^{\eps}-1}$) and ($1-\frac{\eta e^{\eps}}{e^{\eps}-1}$, $\frac{\eta}{e^{\eps}-1}$) in Figure~\ref{fig:roc_dp_tv} for $\delta=0$. The rest of the tradeoff function is derived by averaging. Therefore, the bound provided in Theorem~\ref{thm:CompositionTVApproximate} exactly describes the privacy region of the composed staircase mechanism.

\subsection{Proof of Proposition~\ref{prop:TV-contraction}} \label{app:prop-TV-contraction}    

Recall Dobrushin's classical result~\cite{dobrushin:56}:
\begin{align}
    \label{DobrushinResult}
   \! \sup_{\substack{P_0,P_1
    \\P_0 \neq P_1}}  \!\!\! \frac{||M_0 - M_1||_{\text{TV}}}{||P_0 - P_1||_{\text{TV}}}\! = \! \sup_{x, x'}  d_{TV}\bigl(Q(.|  x), Q(.|  x')\bigr). 
\end{align}
    As such,
    \begin{align*}
       \|M_0 - M_1 \|_{TV} \leq \eta_{TV}(Q)  \|P_0 - P_1 \|_{TV} = \eta \|P_0 - P_1 \|,
    \end{align*}
    where the inequality follows from~\eqref{DobrushinResult}, and the equality follows from Definition~\ref{def:LocalTV}.
    It remains to show that the binary mechanism with erasure achieves
\begin{align}
    ||M_0 - M_1||_{\text{TV}} = \eta \, ||P_0 - P_1||_{\text{TV}}.
\end{align}
Let $\mathcal{A} \subseteq \mathcal{X}$ such that $\mathcal{A}= \{x \in \X: P_0(x) \geq P_1(x)\}$. Hence
\begin{align}
    \|P_0-P_1\|_{TV} = P_0(A)-Q_0(A).
\end{align}
Now note that
\begin{align*}
    M_\nu(0) & = \sum_{x\in \mathcal{A}} \eta \frac{ e^\eps}{e^\eps-1} P_{\nu}(x) + \sum_{x\notin \mathcal{A}}  \eta \frac{ 1}{e^\eps-1} P_{\nu}(x), \\
    M_\nu(1) & = \sum_{x\in \mathcal{A}} \eta \frac{ 1}{e^\eps-1} P_{\nu}(x) + \sum_{x\notin \mathcal{A}} \eta \frac{ e^\eps}{e^\eps-1} P_{\nu}(x), \\
    M_{\nu}(e) & = \left(1 - \eta \frac{1 + e^{\varepsilon}}{e^{\varepsilon}-1} \right) \sum_{x}P_\nu(x) = \left(1 - \eta \frac{1 + e^{\varepsilon}}{e^{\varepsilon}-1} \right).
\end{align*}
Therefore,
\begin{align*}
    & |M_0(0) - M_1(0)| \\
    & = \sum_{x\in \mathcal{A}} \eta \frac{ e^\eps}{e^\eps-1} P_{0}(x) + \sum_{x\notin \mathcal{A}} \eta \frac{1}{e^\eps-1} P_{0}(x) \\
    & \quad - \sum_{x\in \mathcal{A}} \eta \frac{ e^\eps}{e^\eps-1} P_{1}(x) - \sum_{x\notin \mathcal{A}}  \eta \frac{ 1}{e^\eps-1} P_{1}(x)\\
    & = \eta \frac{ e^\eps}{e^\eps-1} \sum_{x\in \mathcal{A}}(P_0(x) - P_1(x)) \\
    & \quad + \eta \frac{ 1}{e^\eps-1} \sum_{x \notin \mathcal{A}}(P_0(x) - P_1(x)) \\
    & = \eta \frac{ e^\eps}{e^\eps-1} ||P_0 - P_1||_{TV} - \eta \frac{1}{e^\eps-1}||P_0 - P_1||_{TV} \\
    & = \eta  ||P_0 - P_1||_{TV} 
\end{align*}
Since $|M_0(0) - M_1(0)| = |M_0(1) - M_1(1)|$, and $|M_0(e) - M_1(e)| = 0$, we get
\begin{align*}
    \|M_0 - M_1 \|_{TV}  = |M_0(0) - M_1(0)| =  \eta \|P_0 - P_1 \|_{TV}.
\end{align*}

\subsection{Proof of Theorem~\ref{thm:opt-opt}} \label{app:thm-opt-opt}

    Let $Q_{Y|X}$ be $\eps$-LDP. Let $W_\alpha: \Y \to \Y \cup \{e\}$ be an erasure channel with parameter $\alpha$, i.e., for all $y \in \Y$, $W(y|y)  = 1-\alpha$ and $W(e|y) = \alpha$. 

    Let $Q'_{Y|X} = W \circ Q_{Y|X}$. Note that for $y \in \Y$,
    $Q'(y|x) = Q(y|x) (1-\alpha)$ for all $x \in \X$, and $Q(e|x) = \sum_{y \in \Y} Q(y|x) W(e|y) = \alpha$ for all $x \in \X$.    
    \begin{Lemma}
        $d_{TV}(Q'_{Y|X}) \leq (1-\alpha) \frac{e^\eps-1}{e^\eps+1}$. 
    \end{Lemma}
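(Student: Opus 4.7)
The plan is to directly compute the total variation distance between $Q'(\cdot \mid x)$ and $Q'(\cdot \mid x')$ for an arbitrary pair $x, x' \in \X$, using the explicit description of $Q'$ in terms of $Q$ and the erasure channel $W_\alpha$, and then invoke the TV bound for $\eps$-LDP mechanisms given in~\eqref{eq:TV-bound}.

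First I would unpack Definition~\ref{def:LocalTV}: it suffices to bound $d_{TV}(Q'(\cdot\mid x), Q'(\cdot\mid x'))$ uniformly in $x, x'$. For any measurable $A \subseteq \Y \cup \{e\}$, split $A$ into $A \cap \Y$ and $A \cap \{e\}$. From the description $Q'(y\mid x) = (1-\alpha) Q(y\mid x)$ for $y \in \Y$ and $Q'(e\mid x) = \alpha$ (independent of $x$), the contribution of the erasure symbol cancels in the difference $Q'(A\mid x) - Q'(A\mid x')$. Concretely,
\begin{align*}
    Q'(A\mid x) - Q'(A\mid x') = (1-\alpha)\bigl(Q(A \cap \Y \mid x) - Q(A \cap \Y \mid x')\bigr).
\end{align*}
Taking the supremum over $A$ therefore gives
\begin{align*}
    d_{TV}\bigl(Q'(\cdot\mid x), Q'(\cdot\mid x')\bigr) = (1-\alpha)\, d_{TV}\bigl(Q(\cdot\mid x), Q(\cdot\mid x')\bigr).
\end{align*}

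Second, since $Q_{Y\mid X}$ is $\eps$-LDP, the known bound~\eqref{eq:TV-bound} (which should be read as $d_{TV}(Q) \leq \tfrac{e^\eps-1}{e^\eps+1}$, the correct Dobrushin-type bound for $\eps$-LDP) yields $d_{TV}(Q(\cdot\mid x), Q(\cdot\mid x')) \leq \tfrac{e^\eps-1}{e^\eps+1}$ for all $x, x'$. Plugging this into the displayed identity and taking the supremum over $x, x'$ yields the claim
\begin{align*}
    d_{TV}(Q'_{Y\mid X}) \leq (1-\alpha)\, \frac{e^\eps-1}{e^\eps+1}.
\end{align*}

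There is no real obstacle here: the argument is essentially a one-line computation showing that precomposing with an erasure channel of parameter $\alpha$ uniformly shrinks all conditional TV distances by the factor $(1-\alpha)$. The only subtlety worth flagging is that the erasure symbol contributes identically under both conditionals, so it does not enter the optimizing set $A$ in the TV supremum; this is what forces the clean multiplicative factor and is the reason the bound has the exact form $(1-\alpha)\tfrac{e^\eps-1}{e^\eps+1}$ rather than something weaker.
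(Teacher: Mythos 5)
Your proof is correct and follows essentially the same approach as the paper: fix $x,x'$ and a set $A$, observe the erasure mass cancels so the difference picks up the factor $(1-\alpha)$, then apply the TV bound $d_{TV}(Q) \leq \tfrac{e^\eps-1}{e^\eps+1}$ for $\eps$-LDP mechanisms (you correctly flag that the displayed inequality~\eqref{eq:TV-bound} in the paper has the fraction inverted). The only cosmetic difference is that you record the intermediate step as an equality rather than an inequality, but this is immaterial for the lemma.
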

    \begin{proof}
        Fix $x, x'$, and $A \subset \Y \cup \{e\}$. Then,
        \begin{align*}
            Q'(A|x) \! - \! Q'(A|x') \! &  = \! \sum_{y' \in A} Q'(y|x)-Q'(y|x') \\
            & = \!\!\! \sum_{y' \in A \setminus \{e\}} \!\!\! (1-\alpha)(Q(y|x)-Q(y|x')) \\
            & \leq (1-\alpha) \| Q(.|x) - Q(.|x') \|_{TV}.
        \end{align*}
        Taking supremum over all $x$, $x'$, and $A$, and noting that $d_{TV}(Q) \leq \frac{e^\eps-1}{e^\eps+1}$ (by equation~\eqref{eq:TV-bound}), we get the desired result.
    \end{proof}
    As such, choosing $\alpha  =1- \eta \frac{e^\eps+1}{e^\eps-1}$, we get $Q'$ satisfying $d_{TV}(Q') \leq \eta$. It is straightforward to check that $Q'$ also satisfies $\eps$-LDP so that $Q'$ is feasible for the $\opt_{\eps,\eta}$ problem. Let $M_0'$ and $M_1'$ be the induced marginals. Then, 
    \begin{align}
        M_0'(e) = \sum_{x \in \X } P_0(x) Q'(e|x) = \alpha,
    \end{align}
    and for all $y \in \Y$,
    \begin{align*}
        M_0'(y) & =  \sum_{x \in \X } P_0(x) Q'(y|x) \\
        & =  \sum_{x \in \X } P_0(x) (1-\alpha) Q(y|x) = (1-\alpha) M_0(y). 
    \end{align*}
    Analogous equations hold for $M_1'.$ Hence,
    \begin{align*}
        & D_f(M_0'||M_1') \\
        & = \sum_{y' \in \Y \cup \{e\} } M_1'(y) f \left( \frac{M_0'(y)}{M_1'(y)} \right) \\
        & = \sum_{y \in \Y} (1-\alpha) M_1(y)  f \left( \frac{M_0(y)}{M_1(y)} \right)  + M_1(e) f \left( \frac{M_0(e)}{M_1(e)} \right) \\
        &\stackrel{\text{(a)}} = (1-\alpha) D_f(M_0||M_1) \\
        &\stackrel{\text{(b)}} = \eta \frac{e^\eps+1}{e^\eps-1} D_f(M_0||M_1),
    \end{align*}
    where (a) follows from the fact that $f(1)=0$, and (b) follows from the choice of $\alpha = 1- \eta \frac{e^\eps+1}{e^\eps-1}$.
    Finally, taking supremum on both sides yields
    \begin{align}
        \opt_{\eps,\eta} \geq \eta \frac{e^\eps+1}{e^\eps-1} \opt_{\eps}.
    \end{align}

\end{document}